\title{Symbolic Computation of Sequential Equilibria}
\author{Moritz Graf}
\affiliation{
  \institution{University of Freiburg}
  \city{Freiburg}
  \country{Germany}}
\email{grafm@cs.uni-freiburg.de}
\author{Thorsten Engesser}
\affiliation{
  \institution{IRIT}
  \city{Toulouse}
  \country{France}}
\email{thorsten.engesser@irit.fr}
\author{Bernhard Nebel}
\affiliation{
  \institution{University of Freiburg}
  \city{Freiburg}
  \country{Germany}}
\email{nebel@uni-freiburg.de}
\begin{abstract}

The sequential equilibrium is a standard solution concept for
ex\-ten\-si\-ve-form games with imperfect information that includes an explicit
representation of the players' beliefs. An assessment consisting of a strategy
and a belief is a sequential equilibrium if it satisfies the properties of
sequential rationality and consistency.

Our main result is that both properties %
together can be written as a single finite system of polynomial equations and
inequalities. The solutions to this system are exactly the sequential equilibria
of the game. We construct this system explicitly and describe an implementation
that solves it using cylindrical algebraic decomposition.
To write consistency as a finite system of equations, we need to compute the
extreme directions of a set of polyhedral cones. We propose a modified version
of the double description method, optimized for this specific purpose.
To the best of our knowledge, our implementation is the first to symbolically
solve general finite imperfect information games for sequential equilibria.%
\footnote{Our implementation is based on the open source Game
Theory Explorer \cite{gte} and can be downloaded at
\url{https://github.com/tengesser/GTE-sequential}.
}

\end{abstract}
\keywords{game theory, extensive-form games, sequential equilibrium}
\newcommand{\citeay}[1]{\citeauthor{#1}~\cite{#1}}
\tikzstyle{thing} = [rectangle, minimum width=3cm, text width = 5cm, minimum height=1cm, text centered, draw=black, fill=gray!10]
\tikzstyle{arrow} = [thick,->,>=stealth]
\newdimen\ndiam
\newdimen\sqwidth
\newdimen\spx
\newdimen\spy
\newdimen\yup
\newdimen\yfracup
\newdimen\paydown
\newdimen\treethickn
\newcommand{\twoeq}[2]{
\noindent
\begin{subequations}
\begin{minipage}{.49\linewidth}
\begin{align}
#1\\
#2
\end{align}
\end{minipage}
\end{subequations}}
\gdef\@copyrightpermission{
	\begin{minipage}{0.3\columnwidth}
		\href{https://creativecommons.org/licenses/by/4.0/}{\includegraphics[width=0.90\textwidth]{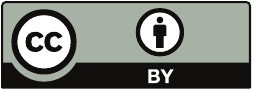}}
	\end{minipage}\hfill
	\begin{minipage}{0.7\columnwidth}
		\href{https://creativecommons.org/licenses/by/4.0/}{This work is licensed under a Creative Commons Attribution International 4.0 License.}
	\end{minipage}
	\vspace{5pt}
}
\begin{document}

\pagestyle{fancy}
\fancyhead{}

\maketitle

\section{Introduction}

The sequential equilibrium was proposed by \citeay{kreps1982sequential} as a
solution concept for extensive-form games with imperfect information.
In extensive-form games, players have multiple decision points in sequence, at
which they must decide how to act. A strategy for a player specifies what action
the player takes at each of their decision points.
When an extensive-form game has imperfect information, it means that the players
do not always know the exact state of the game. The reason is that there may be
actions of other players (or random events) that the players do not observe.

The sequential equilibrium is a generalization of subgame perfect equilibrium,
which is the standard solution concept for extensive-form games with perfect
information. In addition to a strategy, a sequential equilibrium specifies a set
of beliefs for each player, assigning probabilities to states that the player
cannot distinguish between.
Intuitively, two properties must be satisfied: strategies should be rational
given the players' beliefs (\emph{sequential rationality}), and beliefs should
be reasonable given the players' strategies (\emph{consistency}).
Having beliefs as an explicit part of the game's equilibria can provide
additional insight. The strategies specify what the players are doing, and the
beliefs provide an explanation why.

While finding all Nash equilibria or all subgame perfect equilibria of a game
has been implemented in tools like \textit{Gambit} \cite{gambit} or \textit{Game
Theory Explorer} \cite{gte}, there are no implemented solvers that symbolically
compute all sequential equilibria of a finite game. \citeay{azhar05} have
outlined an algorithm, enumerating so-called ``consistent bases'' of a game (of
which there might be exponentially many) and characterizing the sequential
equilibria for each basis by a system of polynomial equations and inequalities.

We take a similar approach, characterizing all equilibria of the game by a
single such system. To this end, we combine existing results from the literature
\cite{hendon1996one,kohlberg1997independence}. Most importantly, our paper
details all the steps necessary to generate and solve the system.

Polynomial systems of equations can in general have an infinite number of
solutions, and indeed games with imperfect information often have an infinite
number of sequential equilibria. Therefore, solving such a system does not
require simply enumerating all the solutions, but rather finding a description
of the solutions that allows any one of them to be easily extracted. For this
purpose we will use the cylindrical algebraic decomposition algorithm provided
by the computer algebra system \textit{Mathematica}. We obtain a list of
intervals, one for each of the variables, which are stratified in the sense that
the boundaries of the intervals depend only on the variables before them. With
this, any sequential equilibria of the game can be obtained by successively
choosing a value for each variable.
Using symbolic computation has the added advantage of allowing us to solve
families of games where some outcomes or probabilities of random events are
controlled by a set of parameters.

In Section~\ref{sec:gametheory} we recapitulate the basic definitions and
solution concepts for extensive-form games.
In Section~\ref{sec:sequential} we study the properties of sequential equilibria
and show how both sequential rationality and consistency can be written as a
system of polynomial equations and inequalities.
Section~\ref{sec:implementation} shows the steps required to implement a
sequential equilibrium solver, as well as some strategies for reducing
computation time.
We conclude in Section~\ref{sec:conclusion}.

\section{Theoretical Background}\label{sec:gametheory}

In Section~\ref{sec:notation}, we introduce the notation used throughout this
paper, mostly following \citeay{osborne1994course}.
In Section~\ref{sec:concepts} we then recapitulate the most important solution
concepts, leading to the definition of sequential equilibria.

\subsection{Extensive-Form Games}\label{sec:notation}

\subsubsection*{Game}

A game consists of a set of \emph{players} $N = \{1, \ldots, n\}$, a set of
\emph{histories} $H$ of which a subset $Z \subseteq H$ are \emph{terminal}
histories, a \emph{player function} $N(h)$ that assigns an acting player to each
non-terminal history, a function $A(h)$ specifying the set of \emph{actions}
available at each non-terminal history, and a \emph{utility function} $u_i(h^*)$
that assigns to each player $i \in N$ a utility for each of the terminal
histories $h^* \in Z$. Imperfect information is represented by a set of
\emph{information sets} $\mathcal{I}$ that partitions the set of non-terminal
histories. We consider only games where the set of histories $H$ is finite.

\subsubsection*{Actions and Histories}

Histories can be thought of as nodes in a game tree. Each history $h = \langle
a_1, \ldots, a_k \rangle \in H$ encodes the sequence of actions leading to that
node.
We say that $h = \langle a_1, \ldots, a_k \rangle$ is a \emph{prefix}\ of $h' =
\langle a'_1, \ldots, a'_l \rangle$ if $k\leq l$ and $a'_i = a_i$ for all $i \in
\{1, \ldots, k\}$ (that is, if $h = h'$ or if $h$ is an ascendant of $h'$ in the
game tree). If $h \in H$, it must thus be the case that $h' \in H$ for all
prefixes $h'$ of $h$.
The terminal histories $h^* \in Z$ are exactly those histories which are not
prefix of some other $h' \neq h^*$ from $H$ (the leaf nodes of the tree).

\subsubsection*{Information Sets}

The information sets $I \in \mathcal{I}$ are used to represent imperfect
information in the game. After some history $h \in I$ is reached, the acting
player cannot distinguish whether $h$ or any other history $h' \in I$ is the
actual history. This implies that in each history of an information set both the
acting player and the set of available actions must be identical. We thus
usually write $N(I)$ and $A(I)$ instead of $N(h)$ and $A(h)$.
We say that a game has \emph{perfect recall} if for any two histories $h, h'$ in
the same information set $I$ of player $i$, the sequence of information sets
encountered and actions played by $i$ are identical for both $h$ and $h'$. This
poses a restriction on game trees: players cannot `forget' information about
their position in the game tree or about the actions they played. Sequential
equilibria are only defined for games with perfect recall.

\subsubsection*{Strategies and Beliefs}
Solution concepts usually contain a strategy profile $\beta = (\beta_1,
\beta_2, \ldots, \beta_n)$, consisting of a behavioral strategy $\beta_i$ for each
player $i$.
These $\beta_i$ assign to each information set $I$ of player $i$ a probability
distribution over all the actions $a\in A(I)$, such that player $i$ plays action
$a\in A(I)$ at $I$ with probability $\beta_i(I)(a)$.

Similarly, a system of beliefs $\mu$ assigns to each information set $I$ a
probability distribution over all histories $h\in I$, such that player $i$
believes to be in history $h \in I$ with probability $\mu_i(I)(h)$.

It is often not important which player a strategy or belief belongs to. Since
each information set has a distinct acting player $i$, we will often write
$\beta(I)(a)$ and $\mu(I)(h)$ without the explicit subscript $i$.

Based on the players strategies $\beta$, we can formulate the probabilities of
reaching a specific history $h = \langle a_1, \ldots, a_k\rangle$ as
\[P_\beta(h) = \prod_{i = 1}^k \beta(I_i)(a_i),\]
where $I_i$ is the information set containing $\langle a_1, \ldots, a_{i-1}
\rangle$.
We can expand this to the probability of reaching an information set $I$:
\[P_\beta(I) = \sum_{h\in I} P_\beta(h)\]
We will also use the conditional probabilities of reaching history $h' = \langle
a_1, \ldots, a_k\rangle$ starting from a prefix $h = \langle a_1, \ldots,
a_l\rangle$, $l<k$:
\begin{align*}
P_\beta(h' | h)
 &= \frac{P_\beta(h')}{P_\beta(h)} %
 = \frac{\prod_{i = 1}^k \beta(I_i)(a_i)}{\prod_{i = 1}^l \beta(I_i)(a_i)}
 = \prod_{i = l+1}^k \beta(I_i)(a_i)
\end{align*}
Note that $P_\beta(h|h') = 1$, since $h'$ can only be reached after $h$ is reached. If $h$ is not a prefix of $h'$,
then $P_\beta(h' | h) =0$.

\subsubsection*{Utilities}

From the utilities $u_i(h^*)$ assigned to terminal histories, we define player
$i$'s expected utility given strategy profile $\beta$:
\[U^E_i(\beta) = \sum_{h^* \in Z} u_i(h^*) P_\beta(h^*)\]
We generalize this to player $i$'s expected utility of the subgame starting at
$h$:
\[U^E_i(\beta | h) = \sum_{h^* \in Z} u_i(h^*) P_\beta(h^* | h)\]
To obtain the utility that player $i$ assigns to an information set $I$, we need
the concept of believed utility:
\[U^B_i(\beta, \mu | I) = \sum_{h^* \in Z} u_i(h^*)
  \sum_{h\in I} \mu(I)(h) P_\beta(h^* | h)\]

Note that at most one term of the inner sum is nonzero, because for every terminal history $h^*$ there is at most one $h \in I$
such that $h$ is a prefix of $h^*$ and therefore $P_\beta(h^* | h) \neq 0$.

\subsection{Solution Concepts}\label{sec:concepts}

The most fundamental solution concept in game theory is the Nash equilibrium.
Intuitively, a strategy profile is a Nash equilibrium if no player can improve
their utility by deviating from their strategy.

\begin{definition}[Nash equilibrium]\label{def:nash}
A strategy profile $\beta$ is a Nash equilibrium if
$U^E_i(\beta') \leq U^E_i(\beta)$
for all players $i$ and strategy profiles $\beta'=( \alpha_i, \beta_{-i}
)$ which only deviate from $\beta$ in player $i$'s strategy.
\end{definition}

For sequential games, where players do not take their actions simultaneously,
Nash equilibria can run into the problem of \textit{non-credible threats}. An example
is shown in Figure~\ref{fig:falsethreat}. The highlighted strategy profile,
where player 1 plays $b$ and player 2 plays $x$, is a Nash equilibrium.
However, playing $x$ is irrational for player $2$: If history $\left< a \right>$
was reached, player $2$ should play $y$ to get a higher payoff.
This is not captured by Nash equilibria. Its definition considers only the
expected utility of the whole game, which does not change when strategies in
unreached parts of the game tree change.

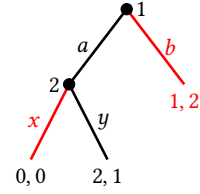
\begin{figure}
  \centering
  \begin{tikzpicture}[scale=0.5, StealthFill/.tip={Stealth[line width=.7pt,inset=0pt,length=13pt,angle'=30]}]
\draw [line width=\treethickn] (0,-0) node[right] {1};
\draw [line width=\treethickn] (-1.53,-2) node[left] {2}
   to node[left] {$a$} (0,-0);
\draw [line width=\treethickn, red] (1.53,-2)
   node[below] {$1,2$}
   to node[right] {\textcolor{red}{$b$}} (0,-0);
\draw [line width=\treethickn, red] (-2.55,-4)
   node[below,black] {$0,0$}
   to node[left] {\textcolor{red}{$x$}} (-1.53,-2);
\draw [line width=\treethickn] (-0.51,-4)
   node[below] {$2,1$}
   to  node[right] {$y$} (-1.53,-2);
\node[inner sep=0pt,minimum size=\ndiam, draw, fill, shape=circle] at (-1.53,-2) {};
\node[inner sep=0pt,minimum size=\ndiam, draw, fill, shape=circle] at (0,-0) {};
\end{tikzpicture}
  \caption{A non-credible threat in an extensive-form game. The actions that are played are marked in red. Each terminal node is annotated with the utilities of player 1 and 2.}
  \label{fig:falsethreat}
\end{figure}

This leads to the subgame perfect equilibrium, a refinement of the Nash
equilibrium in which deviations from the strategy profile must also not increase
the payoffs received by players in any subgame.

\begin{definition}[Subgame Perfect Equilibrium]\label{def:spe} A strategy
profile $\beta$ is a subgame perfect equilibrium if
$U^E_i(\beta' | h) \leq U^E_i(\beta | h)$
for all players $i$, all histories $h$ where that player acts, and all strategy
profiles $\beta'=( \alpha_i, \beta_{-i} )$ that only deviate from
$\beta$ in player $i$'s strategy.
\end{definition}

The marked strategy profile $\beta$ in Figure~\ref{fig:falsethreat} is not
subgame perfect, since
$U^E_2(\beta' | \langle a \rangle) = 1 > 0 = U^E_2(\beta | \langle a \rangle)$,
where $\beta'$ is the strategy profile in which player 2 plays $y$ at $\langle a
\rangle$ instead of $x$.

In imperfect information games, when players reach some information set
$I$, they must decide which action to play without knowing which $h \in I$ is
the actual history. To preserve the concept of subgame perfection, players must
form a belief $\mu(I)(h)$ about which history they are in. Pairs $(\beta, \mu)$
of a strategy profile $\beta$ and a belief system $\mu$ are called
\emph{assessments}.
Intuitively, strategies must be believed to be optimal from each information
set, and beliefs must reflect the probabilities of histories being reached based
on the agents' strategies. This leads us to the following definition.

\begin{definition}[Sequential Equilibrium]
Let $\beta$ be a strategy profile and $\mu$ a belief system. The assessment
$(\beta, \mu)$ is
\begin{itemize}
    \item[(i)] \emph{sequentially rational} if for any information set $I$ of
    player $i$ and any strategy profile $\beta' = (\alpha_i, \beta_{-i})$ with a
    different strategy for player $i$, we have
    \begin{align}
    U^B_i(\beta', \mu | I) \leq U^B_i(\beta, \mu | I),
    \end{align}
    \item[(ii)] \emph{consistent} if there exists a series of assessments
    $(\beta^n, \mu^n)$ such that $\lim_{n \to \infty} (\beta^n, \mu^n) = (\beta,
    \mu)$, and for all $h\in H$ and $I\in \mathcal{I}$,
    \begin{align}
    P_{\beta^n}(h) &> 0 \\
    \mu^n(I)(h) &= \frac{P_{\beta^n}(h)}{P_{\beta^n}(I)}\text{,}
    \end{align}
    \item[(iii)] a \emph{sequential equilibrium} if it is both sequentially
    rational and consistent.
\end{itemize}
\end{definition}

\section{Exploring Sequential Equilibria}\label{sec:sequential}

In the following section we will explore these two properties of sequential
equilibria. Since our goal is to compute all sequential equilibria of a given
game, we will attempt to find ways to transform both properties into a form that
can be used computationally.

For sequential rationality, we will reduce the problem to a more local one,
where we only need to consider alternative strategies $\beta'$ that differ from
$\beta$ only at a single information set $I$, and we will end up with sequential
rationality described by a system of equations and inequalities that are linear
in the beliefs $\mu(I)(h)$ and polynomial in the action probabilities
$\beta(I)(a)$. This largely follows the work of \citeay{hendon1996one} and
requires that the beliefs are consistent.

We then follow the work of \citeay{kohlberg1997independence} to also express
consistency as a finite set of polynomial equations. This requires us to compute
the extreme directions of a set of polyhedral cones, a problem we will discuss
further in Section~\ref{sec:cones}.

\begin{figure}
    \begin{tikzpicture}[scale=.5, StealthFill/.tip={Stealth[line width=.7pt,inset=0pt,length=13pt,angle'=30]}]
\draw [] (-3.32,-1.7) arc(90:270:0.3) -- (3.32,-2.3) arc(-90:90:0.3) -- cycle;
\draw [] (-2.04,-3.7) arc(90:270:0.3) -- (2.04,-4.3) arc(-90:90:0.3) -- cycle;
\draw [line width=\treethickn] (0,-0) node[above] {$1$};
\draw [line width=\treethickn, red] (-3.32,-2)
   node[right] {\textcolor{blue}{1}}
   to node[above] {$a$} (0,-0);
\draw [line width=\treethickn] (3.32,-2)
   node[left] {\textcolor{blue}{0}}
   to node[above] {$b$} (0,-0);
\draw [line width=\treethickn] (-5.11,-4)
   node[below,yshift=0.1\paydown] {$0,0$}
   to node[left] {$c$} (-3.32,-2);
\draw [line width=\treethickn, red] (-2.04,-4)
   node[right] {\textcolor{blue}{0}}
   to node[right] {$d$} (-3.32,-2);
\draw [line width=\treethickn, red] (2.04,-4)
   node[left] {\textcolor{blue}{1}}
   to node[left] {$d$} (3.32,-2);
\draw [line width=\treethickn] (5.11,-4)
   node[below,yshift=0.1\paydown] {$0,1$}
   to node[right] {$c$} (3.32,-2);
\draw [line width=\treethickn] (-3.06,-6)
   node[below,yshift=0.1\paydown] {$0,2$}
   to node[left] {$e$} (-2.04,-4);
\draw [line width=\treethickn, red] (-1.02,-6)
   node[below,yshift=0.1\paydown] {$0,1$}
   to node[right] {$f$} (-2.04,-4);
\draw [line width=\treethickn] (1.02,-6)
   node[below,yshift=0.1\paydown] {$0,0$}
   to node[left] {$e$} (2.04,-4);
\draw [line width=\treethickn, red] (3.06,-6)
   node[below,yshift=0.1\paydown] {$0,2$}
   to node[right] {$f$} (2.04,-4);
\draw (0,-2) node[xshift=0.0cm] {$2$};
\draw (0,-4) node[xshift=0.0cm] {$2$};
\node[inner sep=0pt,minimum size=\ndiam, draw, fill, shape=circle] at (2.04,-4) {};
\node[inner sep=0pt,minimum size=\ndiam, draw, fill, shape=circle] at (-2.04,-4) {};
\node[inner sep=0pt,minimum size=\ndiam, draw, fill ,shape=circle] at (0,-0) {};
\node[inner sep=0pt,minimum size=\ndiam, draw, fill, shape=circle] at (-3.32,-2) {};
\node[inner sep=0pt,minimum size=\ndiam, draw, fill, shape=circle] at (3.32,-2) {};
\end{tikzpicture}
    \caption{A game with an inconsistent assessment. Beliefs are depicted in blue, strategies in red. Although the assessment is \emph{locally} sequentially rational, it is not sequentially rational.}
    \label{fig:localsr}
\end{figure}
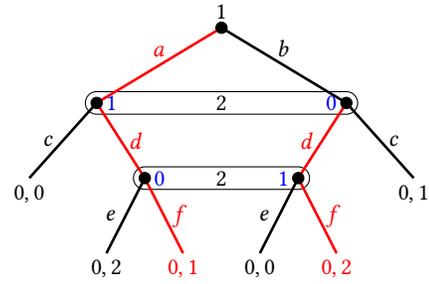

\subsection{Sequential Rationality}

Sequential rationality is the natural extension of subgame perfection. At every
information set, the acting player must believe that no deviating strategy can
improve their utility. That is, the acting player cannot achieve a higher payoff
by changing their own action probabilities in that information set or further
down the game tree.

We recapitulate the result of \citeay{hendon1996one}, that if beliefs are
consistent, we can reduce sequential rationality to a more local property. We
also show how this property can be described by a set of polynomial equations
and inequalities in $\beta$ and $\mu$.
These results are similar to two concepts in the analysis of Nash equilibria and
subgame perfect equilibria: the one-shot deviation principle, which states that
we only need to consider local deviations to verify subgame perfection, and the
fact that players may only assign positive probabilities to actions which are
best responses to the opponents' strategy.

\begin{toappendix}
The believed utility of player $i$ at a given information set $I$, for an
assessment $(\beta, \mu)$, is given by
\[U^B_i(\beta,\mu | I)= \sum_{h^* \in Z} u_i(h^*) \sum_{h \in I} \mu(I)(h) \cdot
    P_\beta(h^*| h).\]
It is helpful to write the believed utility at $I$ in terms of the believed
utility of playing each action $a\in A(I)$:
\begin{align*}
    U^B_i(\beta,\mu | I) &  = \sum_{h^* \in Z} u_i(h^*) \sum_{h \in I} \mu(I)(h) \cdot  P_\beta(h^*| h) \\
    & = \sum_{h^* \in Z} u_i(h^*) \sum_{h \in I} \mu(I)(h) \sum_{a\in A(I)} \beta(I)(a) \cdot P_\beta(h^*|
    \langle h, a \rangle)\\
    & = \sum_{a\in A(I)} \beta(I)(a) \sum_{h \in I} \mu(I)(h) \sum_{h^* \in Z} u_i(h^*) \cdot P_\beta (h^*| \langle
    h, a \rangle)\\
    & = \sum_{a\in A(I)} \beta(I)(a) \sum_{h \in I} \mu(I)(h) U^E_i(\beta | \langle h, a \rangle)
\end{align*}
Note that $P_\beta(h^*| \langle h, a \rangle) = 0$ for all $a$ where $\langle h,
a \rangle$ is not a prefix of $h^*$. For ease of notation, we define
\[U^B_i(\beta, \mu, | I,a) = \sum_{h \in I} \mu(I)(h) U^E_i(\beta | \langle h, a \rangle)\]
as the believed utility player $i$ assigns to playing $a$ at $I$. We thus get
\[U^B_i(\beta,\mu | I) = \sum_{a\in A(I)} \beta(I)(a) U^B_i(\beta, \mu, | I,a).\]
\end{toappendix}

\subsubsection{One-Shot Deviation Principle}
Since sequential rationality is described as a property that holds for each
information set, it is natural to define \emph{sequential rationality at $I$} to
mean that this property holds for a given information set $I$. In addition, we
define \emph{local sequential rationality at $I$} to mean that the property of
sequential rationality at $I$ holds for all strategies that differ from $\beta$
only at $I$. These concepts are used in \cite{hendon1996one}, but not defined by
these names.

\begin{definition}[Local Sequential Rationality]\label{def:localsr}
Let $(\beta, \mu)$ be an assessment and $I\in \mathcal{I}$ an information set of
acting player $i$. The assessment $(\beta, \mu)$ is \emph{locally sequentially
rational at $I$} if $U^B_i(\beta', \mu| I) \leq U^B_i(\beta, \mu| I)$ for any
strategy profile $\beta' = (\alpha_i, \beta_{-i})$ where agent $i$ plays some
strategy $\alpha_i$, with $\alpha_i(I') = \beta_i(I')$ for all $I' \in
\mathcal{I} \setminus \{I\}$. We say $(\beta, \mu)$ is \emph{locally
sequentially rational} if it is locally sequentially rational at every
information set $I \in \mathcal{I}$.
\end{definition}

We restate the one-shot deviation principle from \citeay{hendon1996one}:

\begin{theorem}[One-Shot Deviation Principle, \cite{hendon1996one}]\label{thrm:oneshot}
Let $(\beta, \mu)$ be a locally sequentially rational assessment.
If $(\beta, \mu)$ is consistent, then $(\beta, \mu)$ is sequentially rational and therefore a sequential
equilibrium.
\end{theorem}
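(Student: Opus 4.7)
The plan is to eliminate differing information sets one at a time, starting from the deepest, using local sequential rationality together with consistency to show that each elimination weakly improves the believed utility at the target information set. Fix an information set $I$ of player $i$ and an arbitrary strategy $\alpha_i$, and let $D(\alpha_i)$ denote the set of information sets of player $i$ at which $\alpha_i$ disagrees with $\beta_i$. I prove $U^B_i((\alpha_i,\beta_{-i}), \mu \mid I) \leq U^B_i(\beta, \mu \mid I)$ by induction on $|D(\alpha_i)|$; the base case $|D(\alpha_i)| = 0$ gives $(\alpha_i,\beta_{-i}) = \beta$ and is immediate.

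For the inductive step, pick $I^* \in D(\alpha_i)$ that has no strict descendant in $D(\alpha_i)$, so $\alpha_i$ and $\beta_i$ coincide on all of player $i$'s information sets strictly below $I^*$. Define $\alpha_i'$ to agree with $\alpha_i$ everywhere except that $\alpha_i'(I^*) = \beta_i(I^*)$. Then $|D(\alpha_i')| = |D(\alpha_i)| - 1$, and the induction hypothesis gives $U^B_i((\alpha_i',\beta_{-i}), \mu \mid I) \leq U^B_i(\beta, \mu \mid I)$, so it suffices to show $U^B_i((\alpha_i,\beta_{-i}), \mu \mid I) \leq U^B_i((\alpha_i',\beta_{-i}), \mu \mid I)$. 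To do so, split the defining sum according to whether a terminal history $h^*$ passes through some $h' \in I^*$. The contribution from $h^*$ not passing through $I^*$ depends only on action probabilities outside $I^*$ and is therefore identical for $\alpha_i$ and $\alpha_i'$. For $h^*$ passing through its unique $h' \in I^*$, factor $P_{(\alpha_i,\beta_{-i})}(h^* \mid h) = P_{(\alpha_i,\beta_{-i})}(h' \mid h) \cdot P_{(\alpha_i,\beta_{-i})}(h^* \mid h')$; perfect recall makes the player-$i$ portion of $P(h' \mid h)$ a single constant $K$ across all pairs $(h, h')$ with $h \in I$, $h$ a prefix of $h'$, and this constant is the same for $\alpha_i$ and $\alpha_i'$ because neither modifies the play strictly between $I$ and $I^*$. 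Using consistency to identify $\mu(I^*)(h')$ with the conditional distribution on $I^*$ induced by entering from $I$, the $I^*$-contribution collapses to a nonnegative multiple of $U^B_i((\alpha_i,\beta_{-i}), \mu \mid I^*)$. Since at and below $I^*$ the profile $(\alpha_i',\beta_{-i})$ agrees with $\beta$, local sequential rationality at $I^*$ yields $U^B_i((\alpha_i,\beta_{-i}), \mu \mid I^*) \leq U^B_i(\beta, \mu \mid I^*) = U^B_i((\alpha_i',\beta_{-i}), \mu \mid I^*)$, which lifts back to the desired inequality at $I$.

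The main obstacle is the consistency-based identification of $\mu(I^*)$ with the conditional distribution on $I^*$ induced by entering from $I$ under $\beta$: when $P_\beta(I^*) > 0$ this is straightforward Bayes' rule, but when $P_\beta(I^*) = 0$ both the numerator and denominator vanish, and the identity must be justified by passing to the limit through a consistency-witnessing sequence $(\beta^n, \mu^n)$, tracking ratios of vanishing probabilities. Minor issues to handle are the degenerate case in which no history in $I^*$ has a prefix in $I$ (the $I^*$-contribution is then vacuous and the key step holds with equality) and the case $K = 0$ (the $I^*$-contribution collapses to zero on both sides and the step again trivializes).
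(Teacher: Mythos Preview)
The paper does not supply its own proof of this theorem; it restates the result from \cite{hendon1996one} and attributes it there. Your argument is essentially the standard one from that reference: induct on the number of disagreeing information sets, peel off a deepest one $I^*$, and transfer the local-sequential-rationality inequality at $I^*$ back to $I$ using the pre-consistency relationship between $\mu(I)$ and $\mu(I^*)$ that consistency guarantees. The outline is sound, and you have correctly flagged the delicate step (the identification of $\mu(I^*)$ with the conditional distribution on $I^*$ induced from $\mu(I)$ when the relevant reaching probability under $\beta$ vanishes) as requiring a limit argument through a consistency-witnessing sequence. One minor wording issue: when you say ``neither modifies the play strictly between $I$ and $I^*$'', you mean that $\alpha_i$ and $\alpha_i'$ agree there (they differ only at $I^*$), not that either agrees with $\beta_i$ there; but since the player-$i$ factor $K$ cancels out of the induced conditional distribution on $I^*$ by perfect recall, this does not affect the argument.
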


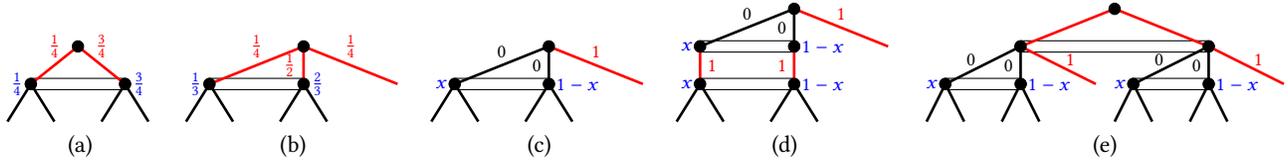
\begin{figure*}[ht]
    \begin{tabular}{ccccc}
        \footnotesize \begin{tikzpicture}[scale=.25 , StealthFill/.tip={Stealth[line width=.7pt,inset=0pt,length=13pt,angle'=30]}]
\draw [] (-2.5,-1.7) arc(90:270:0.3) -- (2.5,-2.3) arc(-90:90:0.3) -- cycle;
\draw [line width=\treethickn] (0,-0);
\draw [line width=\treethickn, red] (-2.5,-2)
   node[left] {\textcolor{blue}{$\frac{1}{4}$}}
   to node[above] {\textcolor{red}{$\frac{1}{4}$}} (0,-0);
\draw [line width=\treethickn, red] (2.5,-2)
   node[right] {\textcolor{blue}{$\frac{3}{4}$}}
   to node[above] {\textcolor{red}{$\frac{3}{4}$}} (0,-0);
\draw [line width=\treethickn] (-3.75,-4) -- (-2.5,-2);
\draw [line width=\treethickn] (-1.25,-4) -- (-2.5,-2);
\draw [line width=\treethickn] (1.25,-4) -- (2.5,-2);
\draw [line width=\treethickn] (3.75,-4) -- (2.5,-2);
\node[inner sep=0pt,minimum size=\ndiam, draw, fill, shape=circle] at (0,-0) {};
\node[inner sep=0pt,minimum size=\ndiam, draw, fill, shape=circle] at (-2.5,-2) {};
\node[inner sep=0pt,minimum size=\ndiam, draw, fill, shape=circle] at (2.5,-2) {};
\end{tikzpicture} &
        \footnotesize \begin{tikzpicture}[scale=.25, StealthFill/.tip={Stealth[line width=.7pt,inset=0pt,length=13pt,angle'=30]}]
\draw [] (-5.00,-1.7) arc(90:270:0.3) -- (0.0,-2.3) arc(-90:90:0.3) -- cycle;
\draw [line width=\treethickn] (0,-0);
\draw [line width=\treethickn, red] (-5.00,-2)
    node[left] {\textcolor{blue}{$\frac{1}{3}$}}
   to node[above] {\textcolor{red}{$\frac{1}{4}$}} (0,-0);
\draw [line width=\treethickn, red] (0.0,-2)
    node[right] {\textcolor{blue}{$\frac{2}{3}$}}
    to node[left] {\textcolor{red}{$\frac{1}{2}$}} (0,-0);
\draw [line width=\treethickn, red] (5.00,-2) to
  node[above] {\textcolor{red}{$\frac{1}{4}$}} (0,-0);
\draw [line width=\treethickn] (-6.25,-4) -- (-5.00,-2);
\draw [line width=\treethickn] (-3.75,-4) -- (-5.00,-2);
\draw [line width=\treethickn] (-1.25,-4) -- (0.0,-2);
\draw [line width=\treethickn] (1.25,-4) -- (0.0,-2);
\node[inner sep=0pt,minimum size=\ndiam, draw, fill, shape=circle] at (0,-0) {};
\node[inner sep=0pt,minimum size=\ndiam, draw, fill, shape=circle] at (-5.00,-2) {};
\node[inner sep=0pt,minimum size=\ndiam, draw, fill, shape=circle] at (0.0,-2) {};
\end{tikzpicture} &
        \footnotesize \begin{tikzpicture}[scale=.25, StealthFill/.tip={Stealth[line width=.7pt,inset=0pt,length=13pt,angle'=30]}]
\draw [] (-5.00,-1.7) arc(90:270:0.3) -- (0.00,-2.3) arc(-90:90:0.3) -- cycle;
\draw [line width=\treethickn] (0,-0);
\draw [line width=\treethickn] (-5.00,-2)
   node[left] {\textcolor{blue}{$x$}}
   to node[above] {$0$} (0,-0);
\draw [line width=\treethickn] (0.00,-2)
   node[right] {\textcolor{blue}{$1-x$}}
   to node[left] {$0$} (0,-0);
\draw [line width=\treethickn, red] (5.00,-2)
   to node[above] {\textcolor{red}{$1$}} (0,-0);
\draw [line width=\treethickn] (-6.25,-4) -- (-5.00,-2);
\draw [line width=\treethickn] (-3.75,-4) -- (-5.00,-2);
\draw [line width=\treethickn] (-1.25,-4) -- (0.00,-2);
\draw [line width=\treethickn] (1.25,-4) -- (0.00,-2);
\node[inner sep=0pt,minimum size=\ndiam, draw, fill, shape=circle] at (0,-0) {};
\node[inner sep=0pt,minimum size=\ndiam, draw, fill, shape=circle] at (-5.00,-2) {};
\node[inner sep=0pt,minimum size=\ndiam, draw, fill, shape=circle] at (0.00,-2) {};
\end{tikzpicture} &
        \footnotesize \begin{tikzpicture}[scale=.25, StealthFill/.tip={Stealth[line width=.7pt,inset=0pt,length=13pt,angle'=30]}]
\draw [] (-5.00,-1.7) arc(90:270:0.3) -- (0.00,-2.3) arc(-90:90:0.3) -- cycle;
\draw [] (-5.00,-3.7) arc(90:270:0.3) -- (0.00,-4.3) arc(-90:90:0.3) -- cycle;
\draw [line width=\treethickn] (0,-0);
\draw [line width=\treethickn] (-5.00,-2)
    node[left] {\textcolor{blue}{$x$}}
   to node[above]{$0$} (0,-0);
\draw [line width=\treethickn] (0.00,-2)
    node[right] {\textcolor{blue}{$1-x$}}
   to node[left] {$0$} (0,-0);
\draw [line width=\treethickn, red] (5.00,-2)
   to node[above] {\textcolor{red}{$1$}} (0,-0);
\draw [line width=\treethickn, red] (-5.00,-4)
    node[left] {\textcolor{blue}{$x$}}
   to node[right] {\textcolor{red}{$1$}} (-5.00,-2);
\draw [line width=\treethickn, red] (0.00,-4)
    node[right] {\textcolor{blue}{$1-x$}}
   to node[left] {\textcolor{red}{$1$}} (0.00,-2);
\draw [line width=\treethickn] (-6.25,-6) -- (-5.00,-4);
\draw [line width=\treethickn] (-3.75,-6) -- (-5.00,-4);
\draw [line width=\treethickn] (-1.25,-6) -- (0.00,-4);
\draw [line width=\treethickn] (1.25,-6) -- (0.00,-4);
\node[inner sep=0pt,minimum size=\ndiam, draw, fill, shape=circle] at (0.00,-4) {};
\node[inner sep=0pt,minimum size=\ndiam, draw, fill, shape=circle] at (-5.00,-4) {};
\node[inner sep=0pt,minimum size=\ndiam, draw, fill, shape=circle] at (0,-0) {};
\node[inner sep=0pt,minimum size=\ndiam, draw, fill, shape=circle] at (-5.00,-2) {};
\node[inner sep=0pt,minimum size=\ndiam, draw, fill, shape=circle] at (0.00,-2) {};
\end{tikzpicture} &
        \footnotesize \begin{tikzpicture}[scale=.25, StealthFill/.tip={Stealth[line width=.7pt,inset=0pt,length=13pt,angle'=30]}]
\draw [] (-5,-1.7) arc(90:270:0.3) -- (5,-2.3) arc(-90:90:0.3) -- cycle;
\draw [] (-9,-3.7) arc(90:270:0.3) -- (-5,-4.3) arc(-90:90:0.3) -- cycle;
\draw [] (1,-3.7) arc(90:270:0.3) -- (5,-4.3) arc(-90:90:0.3) -- cycle;
\draw [line width=\treethickn] (0,0);
\draw [line width=\treethickn, red] (-5,-2) -- (0,-0);
\draw [line width=\treethickn, red] (5,-2) -- (0,-0);
\draw [line width=\treethickn] (-9,-4)
    node[left] {\textcolor{blue}{$x$}}
   to node[left,yshift=2pt] {$0$} (-5,-2);
\draw [line width=\treethickn] (-5,-4)
   node[right] {\textcolor{blue}{$1-x$}}
   to node[left] {$0$} (-5,-2);
\draw [line width=\treethickn] (1,-4)
    node[left] {\textcolor{blue}{$x$}}
   to node[left,yshift=2pt] {$0$} (5,-2);
\draw [line width=\treethickn] (5,-4)
    node[right] {\textcolor{blue}{$1-x$}}
   to node[left] {$0$} (5,-2);
\draw [line width=\treethickn, red] (-1,-4)
   to node[right,yshift=2pt] {\textcolor{red}{$1$}} (-5,-2);
\draw [line width=\treethickn, red] (9,-4)
   to node[right,yshift=2pt] {\textcolor{red}{$1$}} (5,-2);
\draw [line width=\treethickn] (-10,-6) -- (-9,-4);
\draw [line width=\treethickn] (-8,-6) -- (-9,-4);
\draw [line width=\treethickn] (-6,-6) -- (-5,-4);
\draw [line width=\treethickn] (-4,-6) -- (-5,-4);
\draw [line width=\treethickn] (0,-6) -- (1,-4);
\draw [line width=\treethickn] (2,-6) -- (1,-4);
\draw [line width=\treethickn] (4,-6) -- (5,-4);
\draw [line width=\treethickn] (6,-6) -- (5,-4);
\node[inner sep=0pt,minimum size=\ndiam, draw, fill, shape=circle] at (1,-4) {};
\node[inner sep=0pt,minimum size=\ndiam, draw, fill, shape=circle] at (-5,-4) {};
\node[inner sep=0pt,minimum size=\ndiam, draw, fill, shape=circle] at (-9,-4) {};
\node[inner sep=0pt,minimum size=\ndiam, draw, fill, shape=circle] at (0,0) {};
\node[inner sep=0pt,minimum size=\ndiam, draw, fill, shape=circle] at (5,-4) {};
\node[inner sep=0pt,minimum size=\ndiam, draw, fill, shape=circle] at (-5,-2) {};
\node[inner sep=0pt,minimum size=\ndiam, draw, fill, shape=circle] at (5,-2) {};
\end{tikzpicture} \\
        (a) & (b) & (c) & (d) & (e)
    \end{tabular}
    \caption{Five examples of game trees annotated with consistent assessments.
    Beliefs are marked in blue, strategies in red.}
    \label{fig:consistency}
\end{figure*}

Since sequential rationality implies local sequential rationality, we can follow
from Theorem~\ref{thrm:oneshot} that whenever an assessment is consistent, it is
a sequential equilibrium if and only if it is also locally sequentially
rational.
Note that while our version of the theorem requires that the assessment be
consistent, \citeay{hendon1996one} use a weaker form of consistency called
\textit{pre-consistency} that is sufficient for local sequential rationality to
imply sequential rationality. \citeay{perea2002note} has further reduced the
requirement to so-called \textit{updating consistency}.
Since consistency is a necessary condition for sequential equilibria, we
will not use these weaker concepts.

Figure~\ref{fig:localsr} shows why, without a consistency requirement, local sequential
rationality is not a sufficient condition for sequential rationality.
The problem is that even if an assessment is locally sequentially rational, a
deviation in one information set might be believed to be profitable given the
beliefs at an earlier information set. Consider the strategy where agent $2$
plays $e$ instead of $f$ in the lower information set. While the believed
utility in this information set decreases from $2$ to $0$, the believed utility
at the upper information set increases from $1$ to $2$. Thus, while the depicted
assessment is locally sequentially rational, it is not sequentially rational.

\subsubsection{Best Responses}
The following proposition provides a necessary and sufficient condition on local
sequential rationality. If an action $a$ is played with probability
$\beta(I)(a)>0$, then it must be a best response to the other players' actions.
Our proposition generalizes Lemma~33.2 from the book by
\citeauthor{osborne1994course}.

\begin{propositionrep}\label{prop:support}
An assessment $(\beta, \mu)$ is locally sequentially rational if and only if for
all $I \in \mathcal{I}$ and $a \in A(I)$ the following holds:
\begin{align}
\text{ if } \beta(I)(a) > 0, \text{ then } \sum_{h \in I} \mu(I)(h) U^E_i(\beta | \langle h, a \rangle) = U^B_i (\beta,\mu | I) \label{eq:sup1}\\
\text{ if } \beta(I)(a) = 0, \text{ then } \sum_{h \in I} \mu(I)(h) U^E_i(\beta | \langle h, a \rangle) \leq U^B_i(\beta,\mu | I) \label{eq:sup2}
\end{align}
\end{propositionrep}

\begin{proofsketch}
We apply the proof idea for best responses in strategic games: If there is an
action that violates (\ref{eq:sup1}) or (\ref{eq:sup2}), then we can construct a
local deviation of $\beta$ that results in a higher utility, violating local
sequential rationality. If the assessment is not locally sequentially rational,
then there must exist a local deviation with higher utility, which is only
possible if (\ref{eq:sup1}) or (\ref{eq:sup2}) are violated.
\end{proofsketch}

\begin{proof}
``$\Rightarrow$''
Assume $(\beta, \mu)$ is locally sequentially rational and Property
(\ref{eq:sup1}) does not hold for some $I\in \mathcal{I}, a_1\in A(I)$, $\beta(I)(a_1)>0$. Since
\[U^B_i (\beta,\mu | I) = \sum_{a\in A(I)} \beta(I)(a) \sum_{h \in I} \mu (I)(h)
U^E_i(\beta | \langle h, a \rangle),\]
there must be some $a_2\in A(I), a_2\neq a_1$ with $\beta(I)(a_2) > 0$ where Property
(\ref{eq:sup1}) also does not hold, such that without loss of generality:
\[\sum_{h \in I} \mu(I)(h) U^E_i(\beta | \langle h, a_1 \rangle) > U^B_i
(\beta,\mu | I) > \sum_{h \in I} \mu(I)(h) U^E_i(\beta | \langle h, a_2
\rangle)\]

For a suitable $\epsilon > 0$, consider the strategy $\beta'$ that is identical
to $\beta$ everywhere except for those two actions, where:
\begin{align*}
\beta'(I)(a_1) &= \beta(I)(a_1) + \epsilon\\
\beta'(I)(a_2) &= \beta(I)(a_2) - \epsilon
\end{align*}
It follows that:
\allowdisplaybreaks
\begin{align*}
    &\ U^B_i(\beta',\mu | I)\\
    =&\ \sum_{a\in A(I)} \beta'(I)(a) U^B_i(\beta', \mu, | I,a) \\
    =&\ \epsilon\cdot (U^B_i(\beta, \mu, | I,a_1)-U^B_i(\beta, \mu, | I,a_2)) + \sum_{a\in A(I)} \beta(I)(a)
    U^B_i(\beta, \mu, | I,a) \\
    =&\ \epsilon\cdot (U^B_i(\beta, \mu, | I,a_1)-U^B_i(\beta, \mu, | I,a_2)) + U^B_i(\beta,\mu | I) \\
    >&\ U^B_i(\beta,\mu | I).
\end{align*}
\allowdisplaybreaks[0]
This violates local sequential rationality of $\beta$.
Therefore, local sequential rationality implies Property (\ref{eq:sup1}).

Similarly, assume that there exists an action $a^* \in A(I)$ with
$\beta(I)(a^*) = 0$ and $U^B_i(\beta, \mu, | I,a^*) > U^B_i(\beta, \mu, | I)$
such that (\ref{eq:sup2}) does not hold.
Consider the strategy $\beta'$ that is identical to $\beta$ except at I, where
it only plays $a^*$ at $I$:
\begin{align*}
\beta'(I)(a^*) &= 1\\
\beta'(I)(a) &= 0 \qquad \forall a \neq a^*
\end{align*}
It follows that
\begin{align*}
U^B_i(\beta',\mu | I) &= \sum_{a\in A(I)} \beta'(I)(a) U^B_i(\beta', \mu, | I,a) \\
&= U^B_i(\beta, \mu, | I,a^*) > U^B_i(\beta,\mu | I).
\end{align*}
This violates local sequential rationality of $\beta$.
Therefore, local sequential rationality also implies Property (\ref{eq:sup2}).
\hfill$\square$

``$\Leftarrow$''
Let $(\beta, \mu)$ be an assessment for which properties (\ref{eq:sup1}) and
(\ref{eq:sup2}) hold, and let $\beta'$ be a strategy where $\beta'(I') =
\beta(I')$ for all $I'\in \mathcal{I}$ such that $I' \neq I$. Furthermore, let
$A_0 = \{a\in A(I) \mid \beta(I)(a) = 0\}$ and $A_+ = A(I) \setminus A_0$. It
follows that
\begin{align*}
U^B_i(\beta',\mu | I)
& = \sum_{a\in A(I)} \beta'(I)(a) U^B_i(\beta', \mu, | I,a)\\
& = \sum_{a\in A(I)} \beta'(I)(a) U^B_i(\beta, \mu, | I,a)\\
& = \sum_{a\in A_0} \beta'(I)(a) U^B_i(\beta, \mu, | I,a) + \sum_{a\in A_+} \beta'(I)(a) U^B_i(\beta, \mu,
| I,a)\\
& = \sum_{a\in A_0} \beta'(I)(a) U^B_i(\beta, \mu, | I,a) + \sum_{a\in A_+} \beta'(I)(a) U^B_i(\beta, \mu,
| I)\\
& \leq \sum_{a\in A_0} \beta'(I)(a) U^B_i(\beta, \mu, | I) + \sum_{a\in A_+} \beta'(I)(a) U^B_i(\beta, \mu,
| I)\\
& = \sum_{a\in A(I)} \beta'(I)(a) U^B_i(\beta, \mu, | I)\\
& = U^B_i(\beta,\mu | I).
\end{align*}
Note that we first take advantage of the fact that $\beta$ and $\beta'$ are
equivalent outside of $I$. We then use the equations (\ref{eq:sup1}) and
(\ref{eq:sup2}) to replace $U^B_i(\beta, \mu, | I,a)$ by $U^B_i(\beta, \mu, |
I)$ for actions in $A_+$ and $A_0$, where (\ref{eq:sup2}) introduces an
inequality. In the last step, $\beta'(I)(a)$ adds up to 1.

Since $U^B_i(\beta',\mu | I) \leq U^B_i(\beta,\mu | I)$ for any $\beta'$ that only deviates locally, $(\beta, \mu)$
is locally sequentially rational.
\end{proof}

We can finally rewrite equations (\ref{eq:sup1}) and (\ref{eq:sup2}) from
Proposition~\ref{prop:support} as a system of polynomial equations and
inequalities without case distinctions. For all $I \in \mathcal{I}$, $a \in
A(I)$, and $i = N(I)$ we obtain
\begin{align*}
\left(\sum_{h \in I} \mu(I)(h) U^E_i(\beta | \langle h, a \rangle)\right) - U^B_i(\beta,\mu | I)\phantom{\Bigg)} &\leq 0, \text{and}\\
\beta(I)(a) \cdot  \left( \left( \sum_{h\in I}\mu(I)(h) U^E_i(\beta | \langle h, a\rangle)\right)- U^B_i(\beta,
    \mu | I) \right) &= 0.
\end{align*}

An assessment is locally sequentially rational if and only if it satisfies this
system of equations. By Theorem~\ref{thrm:oneshot}, the sequential equilibria of
a game are exactly the consistent assessments that are locally sequentially
rational.
In the following, we will show how consistency can be similarly characterized as
a system of polynomial equations, following the results of
\citeay{kohlberg1997independence}.

\subsection{Consistency}

While sequential rationality enforces that strategies are optimal given players'
beliefs, consistency enforces that beliefs correctly reflect the conditional
probabilities of each history being reached, given players' strategies.
In particular, for information sets that are reached with probability
$P_\beta(I) > 0$, we have
\[\mu(I)(h) = \frac{P_\beta(h)}{P_\beta(I)}= P_\beta(h| I).\]

Note that consistency depends only on the structure of the game tree. In
particular, whether a given assessment is consistent does not depend on the
utilities of the game, nor does it depend on the acting player at each
information set. In the case of $P_\beta(I) = 0$, the restrictions imposed by
consistency can become more complex.

Figure~\ref{fig:consistency} shows the structures of five different game trees,
together with possible consistent assessments. In the simplest cases, the
beliefs correspond directly to the action probabilities (a), or to the
conditional probabilities of the strategies leading to each history (b).
Sometimes, if an information set is not reached, the assessment is consistent
for any belief (c). However, such arbitrary beliefs may further constrain the
beliefs at the next information set (d), and even at different parts of the game
tree (e). In the last two examples, to satisfy consistency, the beliefs at both
information sets must be identical, since they would have to be identical in any
fully mixed assessment (i.e., with only positive action probabilities) that
converges to $(\beta, \mu)$. For a more detailed discussion of how beliefs can
be constrained by consistency, see the paper by \citeay{pimienta2014bayesian}.

We will now follow the work of \citeay{kohlberg1997independence} to represent
consistency by a finite set of polynomial equations. The reduction consists of
several steps: An assessment is consistent if and only if a special system of
linear equations $Ax = b$, where $A$ depends on the structure of the game tree
and $b$ depends on the assessment, has a positive approximate solution. Such a
solution exists if and only if a certain property holds for all vectors $p$ with
$pA = 0$. This property can be written as an equation in $\beta$ and $\mu$.
Finally, the set of relevant vectors $p$ can be reduced so that the system of
equations becomes finite and polynomial without changing the solution set. To
obtain this subset of relevant vectors, we have to compute the extreme
directions of a set of polyhedral cones.

Our contribution is to provide more details on the approach by
\citeauthor{kohlberg1997independence}. We explicitly construct the
linear system $Ax=b$ (Theorem~\ref{thrm:linsys}) and formalize and prove all the
necessary intermediate steps to derive the coefficients and exponents of the
polynomial equations (Propositions~\ref{prop:propK}-\ref{prop:generaltest}). In
\citeauthor{kohlberg1997independence}'s paper, the underlying ideas are stated
informally and without proof.

\subsubsection{Positive Approximate Solutions}
The concept of positive approximate solutions is the first step in our process
of representing consistency as a system of polynomial equations. These are
solutions to systems of equations $Ax=b$, where the vector $b$ may contain
values such as $\infty$ or $-\infty$, or ill-defined expressions such as
$\infty - \infty$ or $\frac{0}{0}$, using the conventions of the
\textit{extended real number line} $\overline{\mathbb{R}}$ (see, e.g.,
\cite{aliprantis19901}). A positive approximate solution is then a series $x^n$
such that each component of each vector $x^n$ is positive and $Ax^n$ converges
to $b$ for each component of $b$ that is well-defined.

\begin{definition}[Positive Approximate Solution of a Linear
System]\label{def:pas} Let $Ax = b$ be a linear system where each $b_i \in
\overline{\mathbb{R}} = \mathbb{R} \cup \{-\infty, \infty\}$ or ill-defined. A
positive approximate solution to this system is a series $(x^n)^{n\in
\mathbb{N}}$ where $x^n > 0$ for all $n \in \mathbb{N}$, and $\lim_{n \to
\infty} (Ax^n)_i = b_i$ for all $i$ such that $b_i$ is well-defined.
\end{definition}

\def\nps{N_{\textit{pairs}}}
\def\nas{N_{\textit{actions}}}

We will now construct a linear system from an assessment $(\beta, \mu)$ in such
a way that it has a positive approximate solution if and only if the assessment
is consistent.

\begin{theoremrep}[Linear System for consistency]\label{thrm:linsys}
Let $\{a_i \mid i\in \{1, \ldots, \nas\}\}$ be the set of all actions in an
extensive-form game (we assume without loss of generality that each action can
be played in exactly one information set), and let $\{(h_i^1, h_i^2) \mid i \in
\{1, \ldots, \nps\}\}$ be the set of all history pairs such that and both
histories $h_i^1$ and $h_i^2$ are in the same information set. Furthermore, let
$M = \nas + \nps$. Then an assessment $(\beta, \mu)$ is consistent if and only
if the linear system $Ax = b$ which is defined as follows has a positive
approximate solution.
\allowdisplaybreaks
\begin{align*}
A &= \bigg[
    \begin{array}{c}
        \tilde{A} \\
        I_{\nas}
    \end{array} \bigg], \text{ where }\tilde{A} \in \mathbb{R}^{\nps \times \nas} \text{ is defined as below:}\\
\tilde{A}_{i,j} &=
\begin{cases}
~ 1  & \textrm{if } a_j \in h_i^1 \text{ and } a_j \notin h_i^2 \\
~ -1 & \textrm{if } a_j \notin h_i^1 \text{ and } a_j \in h_i^2 \\
~ 0  & \textrm{otherwise}
\end{cases}\\
b_i &= log(\alpha_i) - log(\gamma_i) \text{ for } i \in \{1, \ldots, M\} \text{, where}\\
\alpha_i &= 
\begin{cases}
~ \mu(h_i^1) & \textrm{if } i\in \{1, \ldots, N_{pairs}\} \\
~ \beta(a_{i - N_{pairs}}) & \textrm{otherwise}
\end{cases}\\
\gamma_i &=
\begin{cases}
~ \mu(h_i^2) & \textrm{if } i\in \{1, \ldots, N_{pairs}\} \\
~ 1 & \textrm{otherwise}
\end{cases}
\end{align*}
\allowdisplaybreaks[0]
\end{theoremrep}

Note that $A \in \mathbb{R}^{M \times \nas}$, $b\in \overline{\mathbb{R}}^M$,
and $x\in \mathbb{R}^{\nas}$.

\begin{proofsketch}
Both positive approximate solutions and consistency depend on the existence of a
convergent series. For consistency, we need a series of fully-mixed assessments
$(\beta^n, \mu^n)$, where the $\mu^n$ are determined by the $\beta^n$ via Bayes'
rule. We
define a multiplicative system for which the positive approximate solutions
(which are defined similarly as for linear systems) are exactly these series.
For $I\in \mathcal{I}$, each pair $h_1, h_2\in I$ and action $a_i\in I$, we have:
\begin{align}
    \frac{\prod_{a_i\in h_1} x_i}{\prod_{a_i\in h_2} x_i} &
  = \frac{\mu(I)(h_1)}{\mu(I)(h_2)}\\
x_i &= \beta(I)(a_i)
\end{align}
The linear system is then obtained by taking the logarithm of this system. It
has a positive approximate solution if and only if the multiplicative system has
one. The full proof is in the appendix.
\end{proofsketch}

\begin{proof}
Given an assessment $(\beta, \mu)$, consider first the following system of
multiplicative equations:
\begin{align*}
\frac{\prod_{a_i\in h_1} x_i}{\prod_{a_i\in h_2} x_i} &= \frac{\mu(I)(h_1)}{\mu(I)(h_2)} & (M1)\\
x_i &= \beta(I)(a_i) & (M2)
\end{align*}

For each $I\in \mathcal{I}$, each pair $h_1, h_2\in I$ and action $a_i\in I$.
Such a system of the form
$$\alpha_m, \gamma_m \in [0, \infty) ,\ \ \  I_m, J_m \subset \{1, 2, \ldots, N\}$$
$$\frac{\prod_{i\in I_m}x_i}{\prod_{i\in J_m}x_j} = \frac{\alpha_m}{\gamma_m}$$
$$m\in \{1, 2, \ldots, M\}$$
is said to have a positive approximate solution $(x^n)_{n\in \mathbb{N}}$ if
$$x^n > 0,\ \forall n\in \mathbb{N},$$
$$\lim_{n\to\infty} \frac{\prod_{i\in I_m}x^n_i}{\prod_{i\in J_m}x^n_j} = \frac{\alpha_m}{\gamma_m} ,
\forall m:\frac{\alpha_m}{\gamma_m} \neq \frac{0}{0}.$$

We can transform such a multiplicative system to a linear system by taking the
logarithm:
\begin{align*}
&\ \frac{\prod_{i\in I_m}x_i}{\prod_{i\in J_m}x_j} = \frac{\alpha_m}{\gamma_m} \\
\iff &\ log(\frac{\prod_{i\in I_m}x_i}{\prod_{i\in J_m}x_j}) = log(\frac{\alpha_m}{\gamma_m}) \\
\iff &\ \sum_{i\in I_m}log(x_i) - \sum_{i\in J_m}log(x_j) = log(\alpha_m) - log(\gamma_m) \\
\iff &\ A\tilde{x} = b,\ A\in \mathbb{R}^{M \times N},\ b\in \overline{\mathbb{R}}^M, \tilde{x}\in \mathbb{R}^N
\end{align*}
Here, for $i\in \{1, \ldots, M\}, j\in \{1, \ldots, N\}$, we have
$$b_i = log(\alpha_i) - log(\gamma_i),\ \tilde{x}_j = log(x_j),$$
$$A_{i,j} = \left\{
\begin{array}{lll}
    1  & \textrm{ if } j\in I_i, j\not\in J_i \\
    -1 & \textrm{ if } j\not\in I_i, j\in J_i \\
    0   & \, \textrm{otherwise}.
\end{array}
\right. $$

The multiplicative system then has a positive approximate solution if and only
if the multiplicative system has a solution.

Applying this transformation to the multiplicative system above yields the
linear system described in $\ref{thrm:linsys}$. Note that the index sets $I_i,
J_i$ for  correspond to the actions on the path of $h^1_i, h^2_i$ for equations $(M1)$, resulting in $\tilde{A}$, and identify $x_i$ with $\beta(I)(a_i)$
for equations $(M2)$, with $I_{i+N_{pairs}} = \{i\}$ and
$J_{i+N_{pairs}} = \emptyset$. It remains to show that the multiplicative system
has a solution if and only if the assessment is consistent.

$"\Leftarrow"$ Assume ($\beta, \mu$) is consistent, then there exists $(\beta^n,
\mu^n)$ with
\begin{align*}
\beta^n(I)(a) &> 0 &\forall n\in \mathbb{N},\ \forall I\in \mathcal{I},\ a\in A(I)\\
\mu_n(I)(h) &= \frac{P_{\beta^n}(h)}{P_{\beta^n}(I)} &\forall n\in \mathbb{N},\ \forall I\in \mathcal{I},\ h\in I
\end{align*}
\begin{align*}
\lim_{n\to\infty}(\beta^n, \mu^n) &= (\beta, \mu)
\end{align*}

Then $x^n_i = \beta^n(I)(a_i)$ is a positive approximate solution to the system of
equations above. We see that positivity and $(M2)$ follow directly from definition.
\begin{align*}
    x^n_i &= \beta^n(I)(a_i) > 0 \\
    \lim_{n\to\infty}x^n_i &= \lim_{n\to\infty}\beta^n(I)(a_i) = \beta(I)(a_i)
\end{align*}
For $(M1)$ we note that $\forall h_1, h_2\in I$,
\begin{flalign*}
    \frac{\prod_{a_i\in h_1} x^n_i}{\prod_{a_i\in h_2} x^n_i}
    & = \frac{\prod_{a_i\in h_1} \beta^n(I)(a_i)}{\prod_{a_i\in h_2} \beta^n(I)(a_i)} &\\
    &
    = \frac{P_{\beta^n}(h_1)}{P_{\beta^n}(h_2)} = \frac{P_{\beta^n}(h_1)}{P_{\beta^n}(h_2)} \cdot \frac{P_{\beta^n}(I)}{P_{\beta^n}(I)} = \frac{\mu^n(I)
        (h_1)}{\mu^n(I)(h_2)}.&
\end{flalign*}
This gives us, $\forall h_1, h_2$ where $\frac{\mu(I)(h_1)}{\mu(I)(h_2)} \neq
\frac{0}{0}$:

$$\lim_{n\to\infty} \frac{\prod_{a_i\in h_1} x^n_i}{\prod_{a_i\in h_2} x^n_i} =
\lim_{n\to\infty} \frac{\mu^n(I) (h_1)}{\mu^n(I)(h_2)} =
\frac{\mu(I)(h_1)}{\mu(I)(h_2)}$$

"$\Rightarrow$" Assume the system of equations has a positive approximate
solution $x^n$. Then construct an assessment $(\beta^n, \mu^n)$ as:
$$\beta^n(I)(a_i) = \frac{x_i^n}{\sum_{a_j\in A(I)}x_j^n},\ \ \ \mu^n(I)(h) =
\frac{P_{\beta^n}(h)}{P_{\beta^n}(I)} $$

We can see that $\lim_{n\to\infty}\beta^n(I)(a_i) = \lim_{n\to\infty}x^n_i =
\beta(I)(a_i)$ and that  $\beta^n$ is fully mixed.
It remains to show that $\lim_{n\to\infty} \mu^n(I)(h) = \mu(I)(h)$. Consider
that for any pair $h_1, h_2\in I$ where $\frac{\mu(I)(h_1)}{\mu(I)(h_2)} \neq
\frac{0}{0}$:
\begin{flalign*}
\frac{\mu^n(I)(h_1)}{\mu^n(I)(h_2)}
    = \frac{P_{\beta^n}(h_1)}{P_{\beta^n}(h_2)}
    = \frac{\prod_{a_i\in h_1} \beta^n(I)(a_i)}{\prod_{a_i\in h_2} \beta^n(I)(a_i)}
    = \frac{\prod_{a_i\in h_1} x^n_i}{\prod_{a_i\in h_2} x^n_i} &
\end{flalign*}
If $\mu(I)(h) > 0$, we get that
\begin{flalign*}
    \lim_{n\to\infty} \frac{1}{\mu^n(I)(h)} & = \lim_{n\to\infty} \frac{\sum_{h'\in I}\mu^n(I)(h')}{\mu^n(I)(h)} & \\
    & = \sum_{h'\in I} \lim_{n\to\infty} \frac{\mu^n(I)(h')}{\mu^n(I)(h)} & \\
    & = \sum_{h'\in I} \frac{\mu(I)(h')}{\mu(I)(h)}  = \frac{1}{\mu(I)(h)}. &
\end{flalign*}
Therefore $\lim_{n\to\infty} \mu^n(I)(h) = \mu(I)(h)$.
Otherwise, if $\mu(I)(h) = 0$, let $h'\in I$ be another history such that
$\mu(I)(h') > 0$. Because of the previous result, we know that
$\lim_{n\to\infty}\mu^n(I)(h') = \mu(I)(h')$ and it follows that
\begin{flalign*}
    \frac{\lim_{n\to\infty} \mu^n(I)(h)}{\mu(I)(h')}
    & = \frac{\lim_{n\to\infty} \mu^n(I)(h)}{\lim_{n\to\infty} \mu^n(I)(h')} & \\
    & = \lim_{n\to\infty} \frac{\mu^n(I)(h)}{\mu^n(I)(h')} & \\
    & = \frac{\mu(I)(h)}{\mu(I)(h')}.
\end{flalign*}
Multiplying by $\mu(I)(h') \neq 0$ gives us that $\lim_{n\to\infty}\mu^n(I)(h) =
\mu(I)(h)$. Since $\lim_{n\to\infty}\mu^n(I)(h) = \mu(I)(h)$ for all $h\in I$,
the assessment is consistent.
\end{proof}

Note that some of the $b_i$ can be ill-defined. This is the case if $\alpha_i =
\gamma_i = 0$ and thus $b_i = \log(0) - \log(0) = \infty - \infty$. Since the
existence of positive approximate solutions only depends on the equations where
$b_i$ is well-defined, we can reduce the system such that the equations where
$b_i$ is ill-defined are omitted. From now on, we will assume that all $b_i$ are
well-defined.

\subsubsection{Existence of a Positive Approximate Solution}
\citeay{kohlberg1997independence} give a result for the existence of a positive
approximate solution to a linear system. We restate this here without proof:

\begin{theorem}[Solution Existence for Linear Systems, \cite{kohlberg1997independence}]
\label{thm:solex}
A linear system $Ax = b$ has a positive approximate solution if and only if the
following property holds for all $p \in \mathbb{R}^M$ where $pA = 0$:
\begin{align}\label{eq:propK}
\sum_{p_i \neq 0} p_i b_i = 0 \textrm{ or }
\sum_{p_i \neq 0} p_i b_i  \textrm{ is ill-defined}
\end{align}
\end{theorem}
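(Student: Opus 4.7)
I would prove both directions by combining a direct limit argument for the easy direction with classical theorems of the alternative for the hard direction, partitioning the rows of the system according to the value of $b_i$.

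For the forward direction, assume a positive approximate solution $(x^n)$ exists and $p \in \mathbb{R}^M$ satisfies $pA = 0$. Then $p\cdot(Ax^n) = \sum_i p_i (Ax^n)_i = 0$ for every $n$. For each $i$ with $p_i \neq 0$, if the assertion $\sum_{p_i\neq 0} p_i b_i$ is well-defined in $\overline{\mathbb{R}}$, then in particular every individual $b_i$ with $p_i\neq 0$ is well-defined, so $p_i (Ax^n)_i \to p_i b_i$ in $\overline{\mathbb{R}}$; moreover the hypothesis of well-definedness of the sum rules out any $\infty-\infty$ cancellation, so one may pass to the limit inside the sum and conclude $\sum_{p_i\neq 0} p_i b_i = \lim_n 0 = 0$.

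For the backward direction, partition the rows into the sets $F$ (where $b_i$ is finite), $P$ (where $b_i=+\infty$), $N$ (where $b_i=-\infty$), and a set $U$ (ill-defined) that imposes no constraints and can be dropped. I construct the sequence as $x^n = x_0 + nd$, where $x_0$ satisfies $A_F x_0 = b_F$ exactly and $d$ satisfies $A_F d = 0$ together with $(A_P d)_i > 0$ for $i\in P$ and $(A_N d)_j < 0$ for $j\in N$; this gives $A_F x^n = b_F$ and $(A_P x^n)_i \to +\infty$, $(A_N x^n)_j \to -\infty$, as required. The existence of $x_0$ follows from the classical Farkas lemma: for any $p$ with $pA_F = 0$, extend by zero on coordinates in $P\cup N$ to get $\tilde p$ with $\tilde p A = 0$, so the hypothesis of the theorem forces $\tilde p\cdot b = p\cdot b_F = 0$. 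The existence of $d$ follows from Motzkin's theorem of the alternative: if no such $d$ existed, there would be $u_P\geq 0$, $u_N\geq 0$, not both zero, and some $v$, with $A_P^T u_P - A_N^T u_N + A_F^T v = 0$; then taking $p = (v,\,u_P,\,-u_N)$ on $(F,P,N)$ yields $pA = 0$ while $\sum_i p_i b_i = v^T b_F + (+\infty)$, a value that is neither zero nor ill-defined, contradicting the hypothesis.

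The main obstacle I anticipate is careful bookkeeping of the sign patterns that make $\sum p_i b_i$ well-defined versus ill-defined in $\overline{\mathbb{R}}$, since this is exactly the place where the statement of the hypothesis becomes delicate: whenever $p$ puts positive weight on both a $+\infty$ coordinate and a $-\infty$ coordinate the sum is ill-defined and no contradiction is produced, while any one-sided combination yields $\pm\infty$ and so must be excluded by the hypothesis. This is precisely what makes the Motzkin step work, but it needs to be verified case by case. A secondary technical point is that Definition~\ref{def:pas} requires $x^n > 0$ componentwise, which the construction $x_0 + nd$ does not guarantee; I would handle this by adding an additional term in $\ker A_F$ with strictly positive components (whose existence, if necessary, can be obtained from a further application of Motzkin with a strengthened right-hand side), which preserves $A_F x^n = b_F$ and does not affect the divergence of the $P$ and $N$ coordinates.
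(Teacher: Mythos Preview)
The paper does not prove this theorem: it is explicitly restated from Kohlberg and Reny without proof, so there is no argument in the paper to compare your plan against.

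On the merits of your plan itself: the forward direction and the Farkas/Motzkin construction of $x_0$ and $d$ for the backward direction are sound and standard. The only real gap is your positivity patch. A strictly positive vector in $\ker A_F$ need not exist (take any $A_F$ of full column rank), so ``adding an additional term in $\ker A_F$ with strictly positive components'' is not generally available; and strengthening the Motzkin system to also demand $d>0$ introduces a nonnegative slack in the alternative that blocks the conclusion $pA=0$. More to the point, with Definition~\ref{def:pas} read literally the statement itself fails: for $A=(1)$ and $b=-1$ the only $p$ with $pA=0$ is $p=0$, so Property~(\ref{eq:propK}) holds vacuously, yet no sequence $x^n>0$ can converge to $-1$. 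The log transformation in the proof of Theorem~\ref{thrm:linsys} turns the multiplicative constraint $x>0$ into $\tilde x\in\mathbb{R}$ with no sign restriction, so the intended linear-system version carries no positivity requirement; under that reading your construction $x^n=x_0+nd$ already suffices and the patch is unnecessary.
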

The sum can be ill-defined if it contains the expressions $0 \cdot \infty$ or
$\infty - \infty$. Since we only sum over $p_i \neq 0$, we only need to consider
the second case.
Importantly, we can write Property (\ref{eq:propK}) for a given vector
$p \in \mathbb{Z}^M$ as a polynomial equation.

\begin{proposition}\label{prop:propK}
Consider a linear system $Ax=b$ where $b_i = log(\alpha_i) - log(\gamma_i)$ for
some $\alpha_i,\gamma_i\in \mathbb{R}$. Then Property (\ref{eq:propK}) holds for
some vector $p \in \mathbb{Z}^M$ if and only if the following equation is
satisfied:
\begin{align}\label{eq:eqK}
  \prod_{p_i > 0} \alpha_i^{p_i} \prod_{p_i < 0} \gamma_i^{-p_i}
= \prod_{p_i > 0} \gamma_i^{p_i} \prod_{p_i < 0} \alpha_i^{-p_i}
\end{align}
\end{proposition}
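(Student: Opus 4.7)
The plan is to exponentiate Property (\ref{eq:propK}), since taking $e^{\sum p_i b_i}$ converts the logs in $b_i = \log\alpha_i - \log\gamma_i$ into products, which should match Equation (\ref{eq:eqK}) up to rearrangement. Concretely, substituting $b_i$ and splitting the sum according to the sign of $p_i$ yields the algebraic identity
\[
\sum_{p_i \neq 0} p_i b_i \;=\; \log P \;-\; \log Q,
\]
where $P = \prod_{p_i > 0} \alpha_i^{p_i} \prod_{p_i < 0} \gamma_i^{-p_i}$ and $Q = \prod_{p_i > 0} \gamma_i^{p_i} \prod_{p_i < 0} \alpha_i^{-p_i}$ are precisely the two sides of Equation (\ref{eq:eqK}). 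The task then reduces to showing that Property (\ref{eq:propK}) is equivalent to $P = Q$.

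I would finish by a three-way case analysis on whether $P$ and $Q$ are positive or zero. In the generic case $P, Q > 0$, the sum is a finite real number equal to $\log(P/Q)$, so it vanishes iff $P = Q$; Property (\ref{eq:propK}) is therefore equivalent to Equation (\ref{eq:eqK}). If $P = Q = 0$, then the sum contains both a $+\infty$ and a $-\infty$ contribution---coming from the factors that force $Q$ and $P$, respectively, to vanish---so it is ill-defined and Property (\ref{eq:propK}) holds vacuously, while $P = Q$ holds trivially. Finally, if exactly one of $P, Q$ is zero, the sum evaluates to $\pm\infty$ (well-defined and nonzero), so Property (\ref{eq:propK}) fails, and $P \neq Q$ fails as well.

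The only delicate point, and the step I expect to require the most care, is matching the $\pm\infty$ terms on the additive side to the vanishing factors on the multiplicative side. Explicitly, one needs to verify that $p_i b_i = -\infty$ exactly when $(p_i > 0,\,\alpha_i = 0)$ or $(p_i < 0,\,\gamma_i = 0)$---i.e., exactly when the $i$-th factor of $P$ is zero---and symmetrically that $p_i b_i = +\infty$ exactly when the $i$-th factor of $Q$ is zero. Once this correspondence is established, the three cases above exhaust all configurations of $\alpha_i, \gamma_i \in [0, \infty)$ (the degenerate indices with $\alpha_i = \gamma_i = 0$ having already been discarded when reducing to well-defined $b_i$), and the equivalence in the proposition follows in each.
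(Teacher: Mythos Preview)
Your proposal is correct and follows essentially the same approach as the paper: both arguments exponentiate the sum $\sum_{p_i\neq 0} p_i b_i$ and then handle the degenerate cases where some $\alpha_i$ or $\gamma_i$ vanish, with your three-way split on $(P,Q)$ corresponding exactly to the paper's split into the ill-defined case (your $P=Q=0$) and the well-defined case (your remaining two, the second of which is the paper's ``multiply by zero'' edge case). One minor wording slip: in your third case you write ``$P \neq Q$ fails as well,'' but you mean that Equation~(\ref{eq:eqK}) fails, i.e., that $P \neq Q$ \emph{holds}.
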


\begin{proof}
Consider first the case where $\sum_{p_i \neq 0} p_i b_i$ is ill-defined. Here,
we know that Property (\ref{eq:propK}) always holds. Therefore, we only need to
show that equation (\ref{eq:eqK}) is satisfied.
For $\sum_{p_i \neq 0} p_i b_i$ to be ill-defined, there must exist indices $i,
j$ such that $p_i b_i = \infty$ and $p_j b_j = -\infty$. Here the sum $p_i b_i +
p_j b_j = \infty - \infty$ is ill-defined. This happens if either $p_i > 0$ and
$\gamma_i =0$, or $p_i < 0$ and $\alpha_i = 0$. In any case, $\prod_{p_i >0}
\gamma_i^{p_i}\prod_{p_i < 0} \alpha_i^{-p_i} = 0$. Similarly, we know that
either $p_j > 0$ and $\alpha_i = 0$, or $p_j < 0$ and $\gamma_j = 0$, which
means that $\prod_{p_i > 0} \alpha_i^{p_i} \prod_{p_i < 0} \gamma_i^{-p_i} = 0.$
Therefore, equation (\ref{eq:eqK}) is satisfied.

In the case where $\sum_{p_i \neq 0} p_i b_i$ is well-defined, Property
(\ref{eq:propK}) holds for $p$ if and only if $\sum_{p_i \neq 0} p_i b_i = 0$.
We obtain equation (\ref{eq:eqK}) by taking the exponential function and then
multiplying by all the terms with a negative exponent.
\begin{align*}
\sum_{p_i \neq 0} p_i b_i = 0
& \iff \sum_{p_i \neq 0} p_i (log(\alpha_i) - log(\gamma_i)) = 0 & \\
& \iff \prod_{p_i \neq 0}\bigg(\frac{\alpha_i}{\gamma_i}\bigg)^{p_i} = 1 & \\
& \iff \prod_{p_i > 0} \alpha_i^{p_i} \prod_{p_i < 0} \gamma_i^{-p_i} = \prod_{p_i > 0} \gamma_i^{p_i}
\prod_{p_i < 0}
    \alpha_i^{-p_i}
\end{align*}
Note that some of the terms we multiply by can be equal to zero. If this is the
case, all of the terms with positive exponents are nonzero, since otherwise
$\sum_{p_i \neq 0} p_i b_i $ would be ill-defined. Here, neither equation is
satisfied and their equivalency still holds.
\end{proof}

\subsubsection{A Finite System of Equations}
To write consistency as a finite system of polynomial equations, we have to
solve two problems: In Theorem~\ref{thm:solex}, we consider vectors which can
have non-integer components. This means that we cannot use
Proposition~\ref{prop:propK} to obtain an equivalent polynomial equation.
Furthermore, there are infinitely many vectors $p$ with $pA=0$ (except in
perfect information games where $A$ is the identity matrix $I_{\nas}$).

We now reduce the set of relevant vectors to a finite one.

Let $W^A_b \subseteq \{p \mid pA=0\}$ be the set of all $p$ such that $\sum_{p_i
\neq 0} p_i b_i$ is well-defined. We then only need to check
Property~(\ref{eq:propK}) for all $p \in W^A_b$, since we already know that it
holds for all $p \notin W^A_b$.

Consider again that $\sum_{p_i \neq 0} p_i b_i$ is well-defined if there are no
indices $i$ and $j$ such that $p_i b_i = \infty$ and $p_j b_j = -\infty$, For
any $p \in W^A_b$, either all infinite terms of the sum must be positive, or all
infinite terms must be negative. We can thus write $W^A_b = C^A_b \cup -C^A_b$
where
\begin{align*}
C^A_b &= \{p \mid pA = 0 \land p_i \geq 0 \textrm{ if } b_i = \infty \land
p_i \leq 0 \textrm{ if } b_i = -\infty\}, \text{ and}\\
-C^A_b &= \{p \mid pA = 0 \land p_i \leq 0 \textrm{ if } b_i = \infty \land
p_i \geq 0 \textrm{ if } b_i = -\infty\}.
\end{align*}

We now show that it is sufficient to check Property (\ref{eq:propK}) for all $p
\in C^A_b$. As we will see, we do not need to consider $p \in -C^A_b$.

\begin{proposition}\label{prop:cone} Let $Ax = b$ be the linear system from
Theorem~\ref{thrm:linsys}. Then the assessment $(\beta, \mu)$ is consistent if
and only if Property (\ref{eq:propK}) is satisfied for all $p\in C^A_b$.
\end{proposition}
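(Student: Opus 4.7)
The plan is to chain together the two results already established in the excerpt — Theorem~\ref{thrm:linsys} (which converts consistency into the existence of a positive approximate solution) and Theorem~\ref{thm:solex} (which converts that existence into Property~(\ref{eq:propK}) for \emph{every} $p$ with $pA=0$) — and then argue that one needs only to check Property~(\ref{eq:propK}) on the cone $C^A_b$. Combining the two theorems gives immediately that $(\beta,\mu)$ is consistent iff Property~(\ref{eq:propK}) holds for all $p \in \{p : pA = 0\}$, so all that remains is the reduction from this set to $C^A_b$.

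For the reduction I would split $\{p : pA=0\}$ into the set $W^A_b$ on which $\sum_{p_i\neq 0} p_i b_i$ is well-defined and its complement. On the complement the sum is ill-defined by construction, so Property~(\ref{eq:propK}) holds for free; hence it suffices to check $p \in W^A_b$. The next step is the key observation, already spelled out in the excerpt, that $W^A_b = C^A_b \cup -C^A_b$: a mixed sum of $+\infty$ and $-\infty$ terms arises exactly when $p_i$ has one sign on some index with $b_i = \infty$ or $b_i = -\infty$ and the opposite sign on another such index, and ruling this out for both orientations gives precisely $C^A_b$ and $-C^A_b$.

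Finally I would use the symmetry $p \mapsto -p$. Since $\sum_{p_i\neq 0} p_i b_i$ and $\sum_{(-p)_i\neq 0}(-p_i) b_i$ differ only by a sign, one sum equals $0$ iff the other does, and one is ill-defined iff the other is; hence Property~(\ref{eq:propK}) holds for $p$ iff it holds for $-p$. As $-C^A_b = \{-p : p \in C^A_b\}$, checking Property~(\ref{eq:propK}) on $-C^A_b$ is the same as checking it on $C^A_b$. Therefore
\begin{align*}
\text{Property (\ref{eq:propK}) holds on } \{p : pA=0\}
&\iff \text{Property (\ref{eq:propK}) holds on } W^A_b\\
&\iff \text{Property (\ref{eq:propK}) holds on } C^A_b,
\end{align*}
which combined with Theorems~\ref{thrm:linsys} and~\ref{thm:solex} yields the proposition.

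I do not expect a real obstacle here: all the conceptual work has been done in the preceding results and in the paragraph defining $C^A_b$. The one place requiring care is keeping the book-keeping of signs straight when establishing $W^A_b = C^A_b \cup -C^A_b$, in particular handling the case $p_i = 0$ at indices with infinite $b_i$, which puts $p$ in both $C^A_b$ and $-C^A_b$ simultaneously and causes no trouble for the argument.
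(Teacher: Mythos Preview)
Your proposal is correct and follows essentially the same route as the paper's proof: combine Theorems~\ref{thrm:linsys} and~\ref{thm:solex} to reduce consistency to Property~(\ref{eq:propK}) on $\{p:pA=0\}$, discard $p\notin W^A_b$ as automatically satisfied, and use the sign symmetry $p\mapsto -p$ together with $W^A_b=C^A_b\cup -C^A_b$ to cut down to $C^A_b$. The paper's version is simply more terse, but the argument is the same.
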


\begin{proof}
By Theorem~\ref{thm:solex}, the assessment is consistent if and only if Property
(\ref{eq:propK}) is satisfied for all $p$ where $pA = 0$. Since the property is
satisfied if $\sum_{p_i \neq 0} p_i b_i$ is ill-defined, we do not need to
consider vectors $p \not\in W^A_b$. For $p \in W^A_b$, note that $\sum_{p_i \neq
0} p_i b_i = 0 \iff \sum_{p_i \neq 0} -p_i b_i = 0$. This means that Property
(\ref{eq:propK}) holds for $p$ if and only if it holds for $-p$. The assessment
is thus consistent if and only if Property (\ref{eq:propK}) is satisfied for all
$p\in C^A_b$.
\end{proof}

As we can see, $C^A_b$ is an intersection of half spaces and thus a pointed
polyhedral cone:
\begin{align*}
C^A_b = \{p \mid pA= 0\}
\cap \bigcap_{b_i = \infty} \{p \mid p_i \geq 0\}
\cap \bigcap_{b_i = -\infty} \{p \mid p_i \leq 0\}
\end{align*}

We can alternatively represent $C^A_b$ as the set of all conical combinations of
finitely many vectors $\{e_1, \ldots,  e_k\}$ such that
\[C^A_b = \{\lambda_1 e_1 + \ldots + \lambda_k e_k \mid \lambda_k\in \mathbb{R}^+\}.\]

These vectors are called extreme directions (or conical basis) of $C^A_b$.
Transforming one representation into the other can be done with the double
description method \cite{zolotykh2012new}, which we will discuss in
Section~\ref{sec:implementation}.
Note that in the cases where all $b_i$ are infinite, the extreme directions of
$C^A_b$ are unique modulo scaling. Otherwise, this is not necessarily the case.
Furthermore, because the entries of $A$ are always integers, each of the extreme
directions can be scaled to have integer components. This is another result by
\citeay{kohlberg1997independence}. We chose an arbitrary conical basis
$ED(C^A_b)$ which has this property. This will allow us to use
Proposition~\ref{prop:propK} to obtain a system of polynomial equations.
We show that if Property (\ref{eq:propK}) holds for two vectors of a cone, it
also holds for arbitrary conical combinations. This allows us to reduce the
system to a finite one.

\begin{proposition}\label{prop:conecomb}
If Property (\ref{eq:propK}) holds for two vectors $x, y \in C^A_b$ then it must
also hold for any conical combination $z = \alpha x + \beta y$, $\forall \alpha,
\beta \in \mathbb{R}^+$.
\end{proposition}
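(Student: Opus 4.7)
The plan is to reduce Property (\ref{eq:propK}) on $C^A_b$ to a system consisting of coordinate-vanishing conditions together with a single linear equation in $p$, and then to conclude by linearity of finite sums.

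First I would verify that $z = \alpha x + \beta y$ actually lies in $C^A_b$: from $xA = yA = 0$ and $\alpha,\beta > 0$ we get $zA = 0$, and the one-sided sign constraints defining $C^A_b$ (namely $p_i \geq 0$ when $b_i = \infty$ and $p_i \leq 0$ when $b_i = -\infty$) are preserved under positive linear combinations.

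The key observation is that for any $p \in C^A_b$ the sum $S(p) := \sum_{p_i \neq 0} p_i b_i$ is never ill-defined: ill-definedness would require both $+\infty$ and $-\infty$ to appear among the terms $p_i b_i$, but by construction of $C^A_b$ every index with $b_i = +\infty$ has $p_i \geq 0$ and every index with $b_i = -\infty$ has $p_i \leq 0$, so each infinite contribution is $+\infty$. Thus $S(p) \in \mathbb{R} \cup \{+\infty\}$, and Property (\ref{eq:propK}) collapses on $C^A_b$ to the single requirement $S(p) = 0$. In turn, $S(p) = 0$ forces $p_i = 0$ at every index with $|b_i| = \infty$ (otherwise $S(p) = +\infty$), and reduces to the ordinary linear equation $\sum_{i:\, b_i \in \mathbb{R}} p_i b_i = 0$ on the remaining coordinates.

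Applying this characterization to the hypotheses gives $x_i = y_i = 0$ whenever $|b_i| = \infty$, so $z_i = \alpha x_i + \beta y_i = 0$ at the same indices; on the finite-$b_i$ coordinates, linearity yields
\[\sum_{i:\, b_i \in \mathbb{R}} z_i b_i \;=\; \alpha \sum_{i:\, b_i \in \mathbb{R}} x_i b_i \;+\; \beta \sum_{i:\, b_i \in \mathbb{R}} y_i b_i \;=\; 0,\]
so $S(z) = 0$ and Property (\ref{eq:propK}) holds for $z$. The only non-routine step is the extended-reals bookkeeping used to rule out $\infty - \infty$ in $S(p)$ and to reduce Property (\ref{eq:propK}) to a purely linear condition on the finite-$b_i$ coordinates; once that reduction is in place, the proposition is essentially the statement that a convex cone cut out by sign constraints and a linear equation is closed under positive linear combinations.
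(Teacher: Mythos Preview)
Your proof is correct, and the route is a genuine (and somewhat cleaner) variant of the paper's argument. Both proofs begin the same way: note that $z \in C^A_b$ because the cone is closed under positive combinations, and that on $C^A_b$ the sum $S(p) = \sum_{p_i \neq 0} p_i b_i$ is always well-defined, so Property~(\ref{eq:propK}) reduces to $S(p)=0$. The difference is in how the remaining linearity argument is organized. The paper computes $S(z)$ directly by splitting the index set according to which of $x_i$, $y_i$ vanish, then regroups to obtain $\alpha S(x) + \beta S(y) = 0$; this forces it to handle the edge case of indices $j$ with $x_j,y_j \neq 0$ but $z_j = 0$, arguing via the sign constraints of $C^A_b$ that such $b_j$ must be finite so the extra terms $z_j b_j$ are harmless zeros. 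You instead extract one more consequence of $S(x)=S(y)=0$ up front---namely that $x$ and $y$ (hence $z$) already vanish at every index where $b_i$ is infinite---which reduces everything to an ordinary finite linear equation $\sum_{b_i \in \mathbb{R}} p_i b_i = 0$ and makes the closure under conical combinations immediate. Your reduction trades the paper's index-set bookkeeping for a single structural observation about what $S(p)=0$ forces inside $C^A_b$; both arguments ultimately hinge on the same sign constraints, but yours isolates that use once rather than deferring it to an edge case.
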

\begin{proof}
Since $C^A_b$ is a cone, it follows that $z\in C^A_b$. Therefore $\sum_{z_i \neq
0} z_i b_i$ is well defined and Property (\ref{eq:propK}) holds if $\sum_{z_i
\neq 0}z_i b_i = 0$. We have $\sum_{x_i \neq 0} x_i b_i = \sum_{y_i \neq 0} y_i
b_i = 0$. Let $z = \alpha x + \beta y$, $\alpha, \beta \in \mathbb{R}^+$. We
split $z_i\neq 0$ into the cases $(x_i=0, y_i\neq 0)$, $(x_i\neq 0, y_i=0)$, and
$(x_i\neq 0, y_i\neq 0)$.
\allowdisplaybreaks
\begin{align*}
    &\hspace{-12pt}\sum_{(\alpha x + \beta y)_i \neq 0} (\alpha x + \beta y)_i\cdot b_i\\
    = &\sum_{\substack{x_i \neq 0\\ y_i = 0}} (\alpha x + \beta y)_i\cdot b_i
    + \sum_{\substack{x_i = 0\\ y_i \neq 0}} (\alpha x + \beta y)_i\cdot b_i
    + \sum_{\substack{x_i \neq 0\\ y_i \neq 0}} (\alpha x + \beta y)_i\cdot b_i \\
    = &\sum_{\substack{x_i \neq 0\\ y_i = 0}} \alpha x_i b_i
    + \sum_{\substack{x_i = 0\\ y_i \neq 0}} \beta y_i b_i
    + \sum_{\substack{x_i \neq 0\\ y_i \neq 0}} \alpha x_i b_i + \beta y_i b_i \\
    = &\sum_{\substack{x_i \neq 0\\ y_i = 0}} \alpha x_i b_i + \sum_{\substack{x_i \neq 0\\ y_i \neq 0}} \alpha x_i b_i
    + \sum_{\substack{x_i = 0\\ y_i \neq 0}} \beta y_i b_i + \sum_{\substack{x_i \neq 0\\ y_i \neq 0}} \beta y_i b_i\\
    = &\sum_{x_i \neq 0} \alpha x_i b_i + \sum_{y_i \neq 0} \beta y_i b_i 
    = 0
\end{align*}
\allowdisplaybreaks[0]

In the first step of our transformation, there might be some $j$ where
$x_j\neq 0$, $y_j\neq 0$ but $(\alpha x + \beta y)_j = z_j = 0$. The terms $z_j
b_j$ would normally not be included in $\sum_{z_i\neq 0} z_i b_i$. In those
cases, it follows that $\alpha x_j = -\beta y_j$ where $\alpha,\beta > 0$,
therefore $x_j$ and $y_j$ have a different sign. Since $x, y\in C^A_b$, $x_i$
and $y_i$ must have the same sign whenever $b_i$ is infinite. Thus $b_j$ is
finite and $z_j b_j = 0$. We can therefore add these terms to $\sum_{z_i \neq 0}
z_i b_i$ while preserving equality.%
\footnote{The same argument does not work for linear combinations.
Assuming $x_j \neq 0$, $y_j \neq 0$, and $(\alpha x + \beta y)_j = z_j = 0$, it
is possible that $x_j$ and $y_j$ have the same sign, since $\alpha$ and $\beta$
can be negative. Then, $b_j$ can be infinite, in which case $z_j b_j = 0 \cdot
\infty$ is ill-defined.}
\end{proof}

We can now formalize a finite test for consistency.

\begin{proposition}[Finite Consistency Test]\label{prop:finitetest} Let $Ax = b$
be the linear system from Theorem~\ref{thrm:linsys}. Then the assessment
$(\beta, \mu)$ is consistent if and only if Property (\ref{eq:propK}) holds for
all $p\in ED(C^A_b)$.
\end{proposition}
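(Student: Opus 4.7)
The plan is to reduce this statement to Proposition~\ref{prop:cone} together with Proposition~\ref{prop:conecomb}, rather than re-deriving anything from scratch. By Proposition~\ref{prop:cone}, $(\beta,\mu)$ is consistent if and only if Property~(\ref{eq:propK}) holds for every $p \in C^A_b$. So the entire task reduces to showing that checking Property~(\ref{eq:propK}) on $ED(C^A_b)$ is equivalent to checking it on the full cone $C^A_b$.

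The forward direction ($\Rightarrow$) is immediate from $ED(C^A_b) \subseteq C^A_b$. For the reverse direction, I would use the defining property of a conical basis: every $p \in C^A_b$ can be written as $p = \sum_{i=1}^{k} \lambda_i e_i$ with $e_i \in ED(C^A_b)$ and $\lambda_i \geq 0$. I then proceed by induction on $k$. The base case $k=1$ is handled by noting that for $p = \lambda_1 e_1 \in C^A_b \subseteq W^A_b$, the sum $\sum_{p_i \neq 0} p_i b_i = \lambda_1 \sum_{(e_1)_i \neq 0} (e_1)_i b_i$ is well-defined and equals $\lambda_1 \cdot 0 = 0$ by hypothesis on $e_1$. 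For the inductive step, the partial sum $q_{k-1} = \sum_{i=1}^{k-1} \lambda_i e_i$ lies in $C^A_b$ because $C^A_b$ is closed under conical combinations, and by induction Property~(\ref{eq:propK}) holds for $q_{k-1}$. Applying Proposition~\ref{prop:conecomb} to $q_{k-1}$ and $\lambda_k e_k$ (both of which are in $C^A_b$ and satisfy the property) with coefficients $1$ and $1$ yields the property for $p = q_{k-1} + \lambda_k e_k$.

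The main obstacle I anticipate is essentially bookkeeping: Proposition~\ref{prop:conecomb} is stated for two-term conical combinations with nonnegative coefficients, so I need to be careful that each intermediate vector $q_j$ genuinely stays inside $C^A_b$ (so that the hypotheses of Proposition~\ref{prop:conecomb} apply) and that the base-case scaling step respects well-definedness of the sum. Both hold because $C^A_b$ is a cone and hence closed under nonnegative scalar multiplication and addition, ensuring that every partial sum remains in $W^A_b$. No new combinatorial or algebraic machinery is needed beyond what is already established in Propositions~\ref{prop:cone} and \ref{prop:conecomb}, so the argument is short.
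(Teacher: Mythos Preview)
Your proposal is correct and follows essentially the same approach as the paper: both directions are obtained by combining Proposition~\ref{prop:cone} (reducing consistency to checking Property~(\ref{eq:propK}) on $C^A_b$) with Proposition~\ref{prop:conecomb} (extending the property from generators to all conical combinations). The paper's proof is terser and does not spell out the induction on the number of generators, but the underlying argument is the same.
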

\begin{proof}
If the assessment is consistent, then Property (\ref{eq:propK}) must hold for
all $p\in \{p\mid pA=0\}$ due to Theorem (\ref{prop:propK}), so it also holds
for all $p\in ED(C^A_b) \subseteq C^A_b \subseteq \{p\mid pA=0\}$. If Property
(\ref{eq:propK}) holds for all $p\in ED(C^A_b)$, then it holds for all $p\in
C^A_b$ because of Proposition \ref{prop:conecomb} and because each $p$ can be
written as conical combination of $ED(C^A_b)$. The assessment is then consistent
due to Proposition \ref{prop:cone}.

Proposition  \ref{prop:conecomb} also implies that the choice of $ED(C^A_b)$ is
irrelevant, since if Property (\ref{eq:propK}) holds for one set of extreme
directions, then it holds for the whole cone and thus for any other set of
extreme directions.
\end{proof}

\subsubsection{Finding all Consistent Assessments}
We now have a finite test for proving consistency of a given assessment $(\beta,
\mu)$. However, we still cannot easily describe the set of all consistent
assessments. This is because the test from Proposition~\ref{prop:finitetest}
depends on the specific cone $C^A_b$, which depends on the right-hand side of
the linear system $Ax = b$, which depends on the exact values of $(\beta, \mu)$.
More precisely, it is the actions with $\beta(I)(a) = 0$ and the beliefs with
$\mu(I)(h) = 0$ that determine which $b_i$ are finite, $\infty$, or $-\infty$.
Assuming that $A$ (which only depends on the game tree) is fixed, only the
positions of infinite values in $b$ are relevant for $C^A_b$. Formally, if $b_i'
= \infty \iff b_i = \infty$ and $b_i' = -\infty \iff b_i = -\infty$, then
$C^A_{b'} = C^A_b$.

Since we want to characterize all sequential equilibria of a game, we need to
find a criterion that works for arbitrary values of $(\beta, \mu)$. As we will
see, we can use the extreme directions of all cones $\mathcal{C}^A = \{C^A_b
\mid b_i \in \{-\infty, 0, \infty\}, \forall i\}$ relevant to $A$. The set of
extreme directions of all cones relevant to $A$ is defined as follows:
\[\mathcal{E}^A = \bigcup_{C \in \mathcal{C}^A} ED(C) =
\bigcup_{b_i \in \{-\infty, 0, \infty\}, \forall i}ED(C^A_b).\]

We now show how we can use $\mathcal{E}^A$ to characterize consistency
independently of the exact values of $(\beta, \mu)$.

\begin{proposition}[General Consistency Test]\label{prop:generaltest}
Let $(\beta, \mu)$ be an assessment and $Ax=b$ be the linear system from
Theorem~\ref{thrm:linsys}. Then $(\beta, \mu)$ is consistent if and only if
Property (\ref{eq:propK}) holds for all $p \in \mathcal{E}^A$.
\end{proposition}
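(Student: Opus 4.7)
The plan is to deduce both directions by relating the global set $\mathcal{E}^A$ to the single cone $C^A_b$ attached to the specific assessment $(\beta, \mu)$. The key observation I would establish first is that $C^A_b$ depends on $b$ only through the positions at which $b_i$ equals $+\infty$ or $-\infty$; the finite values of $b$ play no role in the inequalities that define the cone. So if we set $\tilde{b}_i = b_i$ whenever $b_i \in \{-\infty, +\infty\}$ and $\tilde{b}_i = 0$ otherwise, then $\tilde{b} \in \{-\infty, 0, \infty\}^M$ and $C^A_{\tilde{b}} = C^A_b$. Consequently $ED(C^A_b) = ED(C^A_{\tilde{b}})$ is, up to positive scaling, a subset of $\mathcal{E}^A$.

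For the forward direction I would invoke Theorem~\ref{thm:solex} directly: consistency of $(\beta, \mu)$ means the linear system $Ax = b$ from Theorem~\ref{thrm:linsys} has a positive approximate solution, so Property~(\ref{eq:propK}) holds for every $p \in \mathbb{R}^M$ with $pA = 0$. Each element of $\mathcal{E}^A$ is by construction an extreme direction of some cone $C^A_{b'} \subseteq \{p : pA = 0\}$ and therefore satisfies $pA = 0$; the property thus holds on all of $\mathcal{E}^A$. Note that this conclusion is strictly stronger than what Proposition~\ref{prop:finitetest} delivers for the single cone $C^A_b$, but Theorem~\ref{thm:solex} gives it for free.

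For the backward direction I would use the observation above: assuming Property~(\ref{eq:propK}) holds on all of $\mathcal{E}^A$, it holds in particular on $ED(C^A_b) = ED(C^A_{\tilde{b}}) \subseteq \mathcal{E}^A$, and Proposition~\ref{prop:finitetest} then immediately yields consistency of $(\beta, \mu)$. The only subtlety I expect to address is that extreme directions of a pointed polyhedral cone are determined only up to positive scaling, so $\mathcal{E}^A$ depends on a choice of representatives. This is harmless, however, because Property~(\ref{eq:propK}) is invariant under scaling $p$ by any nonzero constant: both $\sum_{p_i \neq 0} p_i b_i = 0$ and the ill-definedness of that sum are preserved under $p \mapsto c p$. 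I do not anticipate a deep obstacle; the proof is essentially bookkeeping that connects $\mathcal{E}^A$ to $ED(C^A_b)$ via the sign pattern of $b$, layered on top of Theorem~\ref{thm:solex} and Proposition~\ref{prop:finitetest}.
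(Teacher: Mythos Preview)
Your proposal is correct and follows essentially the same route as the paper: the forward direction invokes Theorem~\ref{thm:solex} together with $\mathcal{E}^A \subseteq \{p \mid pA = 0\}$, and the backward direction uses the inclusion $ED(C^A_b) \subseteq \mathcal{E}^A$ followed by Proposition~\ref{prop:finitetest}. Your explicit justification of that inclusion via the sign pattern $\tilde b$ and your remark on scale-invariance of Property~(\ref{eq:propK}) simply spell out details the paper leaves implicit (the former appears in the discussion preceding the proposition).
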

\begin{proof}
``$\Leftarrow$'' If Property (\ref{eq:propK}) holds for all $p\in
\mathcal{E}^A$, then it holds specifically for all $p\in ED(C^A_b)\subseteq
\mathcal{E}^A$. Thus, by Proposition~\ref{prop:finitetest}, the assessment is
consistent.
``$\Rightarrow$'' If the assessment is consistent, then by
Theorem~\ref{thm:solex}, Property (\ref{eq:propK}) must hold for any $p$ such
that $pA=0$. Since $\mathcal{E}^A \subseteq \{p \mid pA = 0\}$, Property
(\ref{eq:propK}) holds for all $p \in \mathcal{E}^A$.
\end{proof}

\subsubsection{Polynomial Equations}

We can now express consistency as a finite system of polynomial equations. This
result follows directly from Propositions \ref{prop:propK} and
\ref{prop:generaltest}.

\begin{theorem}[Consistency as Polynomial Equations,
\cite{kohlberg1997independence}]\label{thrm:polyeq}
Let $A$, $\alpha$, and $\gamma$ be defined as in Theorem~\ref{thrm:linsys}. Then
an assessment $(\beta, \mu)$ is consistent if and only if for all $p\in
\mathcal{E}^A$,
\[\prod_{p_i > 0} \alpha_i^{p_i} \prod_{p_i < 0} \gamma_i^{-p_i} = \prod_{p_i >
0} \gamma_i^{p_i} \prod_{p_i < 0} \alpha_i^{-p_i}.\]
\end{theorem}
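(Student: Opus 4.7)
The plan is to chain together the two equivalences already established. By Proposition~\ref{prop:generaltest}, the assessment $(\beta,\mu)$ is consistent if and only if Property~(\ref{eq:propK}) holds for every $p \in \mathcal{E}^A$. By Proposition~\ref{prop:propK}, for any integer-valued vector $p$, Property~(\ref{eq:propK}) is equivalent to the polynomial equation~(\ref{eq:eqK}) being satisfied with the $\alpha_i, \gamma_i$ fixed by Theorem~\ref{thrm:linsys}. Composing these two equivalences directly yields the claim.

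The one bookkeeping point to handle is that Proposition~\ref{prop:propK} is stated for $p \in \mathbb{Z}^M$, whereas $\mathcal{E}^A$ was defined as a set of extreme directions in $\mathbb{R}^M$. This is resolved by the observation made before Proposition~\ref{prop:conecomb}: since $A$ has integer entries, each extreme direction of every cone $C \in \mathcal{C}^A$ can be rescaled to have integer components. Both sides of Property~(\ref{eq:propK}) scale linearly in $p$, so the property is preserved under positive rescaling; hence we may fix, once and for all, integer-valued representatives for every extreme direction and regard $\mathcal{E}^A$ as a subset of $\mathbb{Z}^M$. Under this convention, Proposition~\ref{prop:propK} applies directly to each $p \in \mathcal{E}^A$.

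With this in place, the proof reduces to a two-step chain: consistency $\iff$ Property~(\ref{eq:propK}) for every $p \in \mathcal{E}^A$ (Proposition~\ref{prop:generaltest}) $\iff$ equation~(\ref{eq:eqK}) for every $p \in \mathcal{E}^A$ (Proposition~\ref{prop:propK}, applied termwise). No substantial obstacle appears; the theorem is essentially a packaging of the preceding lemmas. The only thing worth double-checking is that the exponents and the split into positive and negative indices match between the two propositions and the theorem statement, which is immediate from the explicit definitions of $\alpha$ and $\gamma$ in Theorem~\ref{thrm:linsys}.
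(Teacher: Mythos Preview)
Your proposal is correct and matches the paper's approach exactly: the paper states that this theorem ``follows directly from Propositions~\ref{prop:propK} and~\ref{prop:generaltest},'' which is precisely the two-step chain you describe. Your explicit handling of the integrality of the vectors in $\mathcal{E}^A$ is already covered in the paper by the convention, stated just before Proposition~\ref{prop:conecomb}, that $ED(C^A_b)$ is chosen with integer components.
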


\section{Implementation}\label{sec:implementation}
In the previous section we have seen how sequential rationality can be expressed
as a system of polynomial equations and inequalities if we assume consistency.
We have also seen how consistency can be expressed as a system of polynomial
equations. Together, these equations characterize the set of all sequential
equilibria.

\subsection{Equations}
First, we recapitulate the entire system of equations and inequalities. The
variables in our equations are the probabilities $\beta(I)(a)$ for each action
$a$ to be played at its information set $I$, and the beliefs $\mu(I)(h)$ that
players assign to each history $h$ at $I$. The equations are quantified over all
$I \in \mathcal{I}$ (with $i = N(I)$), $a \in A(I)$, and $p \in \mathcal{E}^A$:

\twoeq{\phantom{\sum_{h\in I}}\beta(I)(a) &\geq 0\label{eq:prob1}}{\sum_{a\in A(I)}\beta(I)(a) &= 1\label{eq:prob2}}
\twoeq{\phantom{\sum_{h\in I}}\mu(I)(h) &\geq 0\label{eq:bel1}}{\sum_{h\in I}\mu(I)(h) &= 1\label{eq:bel2}}
\begin{align}
\left( \sum_{h\in I}\mu(I)(h) U^E_i(\beta | \langle h, a\rangle) \right)- U^B_i(\beta, \mu | I) \phantom{\Bigg)} &\leq 0
\label{eq:sr1}\\
\beta(I)(a) \cdot \left(\left( \sum_{h\in I}\mu(I)(h) U^E_i(\beta | \langle h, a\rangle) \right)- U^B_i(\beta,
\mu | I)\right) &= 0
\label{eq:sr2}
\end{align}
\begin{align}
\prod_{p_i > 0} \alpha_i^{p_i} \prod_{p_i < 0} \gamma_i^{-p_i} = \prod_{p_i
> 0} \gamma_i^{p_i} \prod_{p_i < 0} \alpha_i^{-p_i}
\label{eq:cons}
\end{align}

Equations (\ref{eq:prob1}-\ref{eq:bel2}) ensure that strategies $\beta(I)$ and
beliefs $\mu(I)$ are probability distributions. Equations (\ref{eq:sr1}) and
(\ref{eq:sr2}) correspond to the sufficient and necessary conditions for local
sequential rationality (Proposition~\ref{prop:support}). The equations of type
(\ref{eq:cons}) ensure consistency.
That is, $\mathcal{E}^A$ is the set of extreme directions of all cones from
\[\mathcal{C}^A = \{C^A_b ~|~ b_i \in \{-\infty, 0, \infty\}, i\in\{1,
\ldots, M\}\}\text{, where}\]
\[C^A_b = \{p | pA=0\}\  \cap \bigcap_{i:\ b_i = \infty} \{p | p_i \geq 0\} \cap
 \bigcap_{i:\ b_i = -\infty} \{p |  p_i \leq 0\},\]
and $A$, $\alpha$, and $\gamma$ are defined as in Theorem~\ref{thrm:linsys}. Next,
we will briefly detail how to compute these extreme directions.

\subsection{Finding all Extreme Directions}\label{sec:cones}

A naive approach is to compute the extreme direction of each cone separately.
This can be done with the so-called \emph{double description method}
\cite{zolotykh2012new}. This algorithm computes the extreme directions of a
given cone by iteratively considering all constraints, calculating new extreme
directions at each iteration based on the current constraint and the previously
computed extreme directions.

For example, to determine the extreme directions of $C^A_{(\infty, \infty,
\infty)}$, the algorithm computes the extreme directions of $C^A_{(0, 0, 0)}$,
$C^A_{(\infty, 0, 0)}$, and $C^A_{(\infty, \infty, 0)}$ as intermediate steps.
As we can see, running the algorithm for each cone separately is inefficient
because the extreme directions of some cones are computed exponentially often as
intermediate steps.
We can avoid this by computing the extreme directions of cones with fewer
constraints first and memorizing the results for the computation of cones with
more constraints. Consider the following collection of sets:
\begin{align*}
\mathcal{C}_i &= \{ C^A_b \mid  b_j = 0, \forall j \geq i \}
              &\forall i \in \{1, \ldots, M+1\}
\end{align*}

Our algorithm first computes the extreme directions of $C^A_{(0, \ldots, 0)}$
(which is the only cone in $\mathcal{C}_1$) and then iteratively computes the
extreme directions for all the cones in the sets $\mathcal{C}_2, \ldots,
\mathcal{C}_{M+1}$. Importantly, each cone in $\mathcal{C}_{i+1}$ corresponds to
a cone in $\mathcal{C}_i$ with at most one constraint added ($p_i \leq 0$ or
$p_i \geq 0$). The computation of new extreme directions for that cone thus
corresponds to performing a single additional step of the double description
method.

The way we iterate over the cones ensures that our algorithm only has to compute
the extreme directions of each cone once. However, each cone may still be
relevant to the set of extreme directions. In general, there are $3^{M}$ cones,
where $M$ is the number of actions plus the number of pairs of histories in the
same information set. For larger games, the number becomes prohibitively large.

The number of cones can be reduced by identifying and removing actions that are
not relevant to consistency. These are the actions such that for all pairs of
histories in the same information set, the action is either on the path of both
histories, or on neither.

We can further optimize our approach by pruning cones for which we can
determine that no additional extreme directions will be introduced. The full
algorithm can be found in the Appendix.

\begin{toappendix}
A single step of the double description method, adapted to our kind of cones and
restrictions, works in the following way:
For a cone $C^A_b$ and a new restriction $p_i \geq 0$, we partition its extreme
directions $ED(C^A_b)$ into three sets:
$$U_+ = \{u \mid u\in ED(C^A_b),\ u_i > 0\}$$
$$U_- = \{u \mid u\in ED(C^A_b),\ u_i < 0\}$$
$$U_0 = \{u \mid u\in ED(C^A_b),\ u_i = 0\}$$

New extreme directions are generated by all pairs of $u\in U_+, v\in U_-$ that
are \textit{adjacent} in $C^A_b$. For general restrictions $\langle a, p \rangle
\leq 0$ they are calculated as $w =  \langle a, u \rangle v - \langle a, v
\rangle u$, but since our restrictions are all of the form $a_ip_i \leq 0$
($a_i\in\{-1, 1\}$), this simplifies to
$$w =  a_i \cdot u_i\cdot  v - a_i\cdot v_i\cdot u = a_i(u_i \cdot v - v_i \cdot
u).$$
for restriction $p_i\geq 0$ ($a_i = -1$). The new extreme directions are then
$$U_{new} = \{w = v_i \cdot u - u_i\cdot v \mid u\in U_+, v\in U_-, (u, v)\
adjacent\textrm{ in }C^A_b\},$$
and the new cone $C^A_{b'} = C^A_b \cap \{p\mid p_i > 0\}$ has extreme
directions
$$ED(C^A_b) = U_+ \cup U_0 \cup U_{new}.$$

For the opposite restriction $p_i \leq 0$ we get the same partition of extreme
directions, only that $U_+$ and $U_-$ are swapped. We can see that a pair $(v,
u)$ produces the same extreme direction $w'$ for $p_i \leq 0$ as $(u, v)$ does
for $p_i\geq 0$:
$$w' = 1 \cdot (v_i \cdot u - u_i \cdot v) = -1 \cdot (u_i \cdot v - v_i \cdot
u) = w$$
This means we can compute the extreme directions of the two cones
$$C^A_{b'} = C^A_b \cap \{p\mid p_i \geq 0\} \textrm{ and } C^A_{b''} = C^A_b
\cap \{p\mid p_i \leq 0\}$$
at the same time.

We iterate over all cones using the collection of sets described above:
\[\mathcal{C}_i = \{ C^A_b \mid \forall j \geq i : b_j = 0 \} \forall  \in
\{1, \ldots, M\}, \mathcal{C}_{M+1}=\mathcal{C} \]
At iteration $i$, we consider all cones in the set $\mathcal{C}_{i}$. For each
$C^A_b\in \mathcal{C}_{i}$, we perform a step of the double description method
as described above to calculate the extreme directions of the two cones:
$$C^A_{b'} \textrm{ where } b' = (b_1, \ldots, b_{i-1}, \infty, 0, \ldots, 0)$$
$$C^A_{b''} \textrm{ where } b'' = (b_1, \ldots, b_{i-1}, -\infty, 0, \ldots, 0)$$
By doing this for all $C^A_b\in \mathcal{C}_{i}$, this iteration has calculated
the extreme directions of all cones in $\mathcal{C}_{i+1}$ based on the extreme
directions of all cones in $\mathcal{C}_{i}$. Note that in practice, we
encode $b$ with values in $\{-1, 0, 1\}$, with $1$ representing $\infty$ and
$-1$ representing $-\infty$.

We start with the extreme directions of $C^A_{(0, \ldots, 0)} = \{p |  pA = 0\}$. Since it
is a full vector space, any basis 
$v^1, \ldots, v^n$ of $\{p |  pA = 0\}$ together with its negatives $-v^1, \ldots, -v^n$
is a set of extreme directions of $C^A_{(0, \ldots, 0)}$. We select a basis that has only integer components
, which is possible since $A$ only has integer components.

\begin{algorithm}[ht]
    \caption{Modified Double Description Method}
    \begin{algorithmic}
        \State $b_0 \gets (0, 0, \ldots, 0)$
        \State $b_0\_ed \gets \{v^1, \ldots, v^n, -v^1, \ldots, -v^n\}$
        \State $ED(b_0) \gets b_0\_ed$
        \State $\mathcal{E} \gets b_0\_ed$
        \State $new\_base\_cones \gets \{b_0\}$
        \For{$i \in \{1, \ldots, M\}$}
            :
            \State $base\_cones \gets new\_base\_cones$
            \State $new\_base\_cones \gets \emptyset$
            \For{$b \in base\_cones$}
                :
                \State $U_+ \gets \{u \mid u\in ED(b),\ u_i > 0\}$
                \State $U_- \gets \{u \mid u\in ED(b),\ u_i < 0\}$
                \State $U_0 \gets \{u \mid u\in ED(b),\ u_i = 0\}$
                \State $U_{new} \gets \emptyset$
                \For{$u\in U_+$}
                    :
                    \For{$v\in U_-$}
                        :
                        \If{$adjacent(b, u, v)$}
                            :
                            \State $w = v_i \cdot u - u_i\cdot v $
                            \State $U_{new}\gets \{w\}$
                        \EndIf
                    \EndFor
                \EndFor
                \State $b_+ \gets (b_1, \ldots, b_{i-1}, \infty, 0, \ldots, 0)$
                \State $ED(b_+) \gets U_+ \cup U_0 \cup U_{new}$
                \State $b_- \gets (b_1, \ldots, b_{i-1}, -\infty, 0, \ldots, 0)$
                \State $ED(b_-) \gets U_- \cup U_0 \cup -U_{new}$
                \State $\mathcal{E} \gets \mathcal{E} \cup U_{new}$
                \State $new\_base\_cones \gets new\_base\_cones \cup \{b, b_+, b_-\}$
            \EndFor
        \EndFor \\
        \Return $\mathcal{E}$
    \end{algorithmic}
\end{algorithm}

The adjacency criterion is discussed in more detail in \cite{zolotykh2012new},
together with multiple ways that adjacency can be efficiently tested. In short,
the adjacency criterion is needed to ensure that $w$ is actually a new extreme direction
not already present in the old cone. The condition we use, adapted for our setting states that $u, v$ are
adjacent in $C^A_b$ if and only if
$$Z_b(u) \cap Z_b(v) \supset Z_b(w), \forall w\in ED(b)\setminus\{u, v\},$$
where $Z_b(x) = \{i | b_i \neq 0 \land u_i = 0\}.$

    To reduce the number of cones we have to consider, we note that if for some
cone $C^A_b \in \mathcal{C}_i$ and restriction $x_{i+1} \geq 0,$ the sets $U_+$
or $U_-$ are empty, then the set of new extreme directions $U_{new}$ is also
empty and it follows that $ED(b)\supseteq ED(b_\pm)$. If this is the case
for all restrictions $x_{j} \geq 0,\ j > i$, then no cone built from $C^A_b$
can ever generate a new extreme direction. Since we are only interested in new
extreme directions, we can ignore $C^A_b$ and any cone built from it in our
calculations. This allows the following optimization: We only add a cone $b$ (or
$b_+$, $b_-$) to $new\_base\_cones$ if:
$$\exists j > i:\ ED(b)\cap \{x |  x_j > 0\} \neq \emptyset \land ED(b)\cap \{x |  x_j < 0\} \neq \emptyset.$$

Note that this optimization only checks whether a pair $(u, v)$ exists, not
whether it is also \textit{adjacent}. This is because this property would not
extend to all cones built from $C^A_b$ in the same way. Since adjacency of $u$
and $v$ depends on all other extreme directions, it is possible that the current
cone has no adjacent pairs $ (u, v), u\in U_+, v\in U_-$ for any restriction,
but has a pair $(u, v)$ that is adjacent in a future cone $C^A_{b'}$ after
another restriction has removed extreme directions from $C^A_b$.
\end{toappendix}

\subsection{Cylindrical Algebraic Decomposition}
Cylindrical algebraic decomposition is an algorithm that partitions a subset of
$\mathbb{R}^n$ specified by a set of polynomials into so-called \emph{cells}
(i.e., connected subsets of $\mathbb{R}^n$). For a complete description of the
algorithm, see \citeay{lavalle2006planning}. As implemented in
\textit{Mathematica} \cite{cylindersmathworld}, it allows us to solve a system
of polynomial equations and inequalities. Each cell successively assigns to each
variable $v_i$ an algebraic function specifying an interval $I_i(v_1, \ldots,
v_{i-1})$ of possible values for that variable. To extract a specific solution,
we can first select $v_1\in I_1$, then $v_2\in I_2(v_1)$, then $v_3\in I_3(v_1,
v_2)$, and so on.

Note that cylindrical algebraic decomposition is only defined for polynomials
with coefficients in $\mathbb Q$. Strictly speaking, our approach is therefore
restricted to games with rational payoffs.

\begin{figure}
\newcommand{\myfrac}[2]{\frac{#1}{#2}}
\begin{tabular}{l|l|l|l|l|l}
$\beta(a)$ & $\beta(c)$ & $\beta(e)$ & $\mu(\left<a\right>)$ & $\mu(\left<a,d\right>)$ & $U^E_2(\left<\right>)$ \\ \hline
$\in [0,\myfrac{2}{3})$ & $= 0$ & $= 0$ & $= \beta(a)$ & $= \beta(a)$ & $= 2 - \beta(a)$ \\
$= \myfrac{2}{3}$ & $= 0$ & $\in [0, 1]$ & $= \myfrac{2}{3}$ & $= \myfrac{2}{3}$ & $= \myfrac{4}{3}$ \\
$\in (\myfrac{2}{3},1]$ & $= 0$ & $= 1$ & $= \beta(a)$ & $= \beta(a)$ & $= 2 \cdot \beta(a)$
\end{tabular}

\caption{All sequential equilibria of the game from Figure~\ref{fig:localsr} as
returned by the cylindrical algebraic decomposition. Note that, e.g., $\beta(b)
= 1-\beta(a)$, $\mu(\left<b\right>) = 1-\mu(\left<a\right>)$ and
$U_1^E(\left<\right>) = 0$.}
\label{fig:solution}
\end{figure}

Cylindrical algebraic decomposition has a computational complexity that is
double exponential in the number of variables. This makes it infeasible for
large games. Our system of polynomials contains one variable $\beta(I)(a)$ for
each action and one variable $\mu(I)(h)$ for each history in each information
set. We can reduce the number of variables by defining one belief and one action
probability for each information set implicitly through the others.

Another optimization that has proven effective in some cases is to add
additional equations describing the Nash equilibria of the game. Since all
sequential equilibria are Nash equilibria, this does not change the solution
space and we obtain the same solutions. The polynomial equations characterizing
the Nash equilibria follow the usual idea of restricting the behavioral
strategies to best responses.

An example output can be seen in Figure~\ref{fig:solution}. The sequential
equilibria are partitioned into three cells with infinitely many elements.

\section{Conclusion}\label{sec:conclusion}

The main contribution of this paper is to show that the set of all sequential
equilibria of an extensive-form game can be represented by a system of
polynomial equations and inequalities. To this end, we combined theoretical
results by \citeay{hendon1996one} and \citeay{kohlberg1997independence}. We
furthermore described how to obtain and solve this system using symbolic
computation.

For sequential rationality, we used a local form of sequential rationality that
considers only the deviations at each information set. Local sequential
rationality holds at $I$ if and only if the acting player believes all played
actions to be best responses to the other players' strategies. We have shown how
to represent this property as a system of equations and inequalities that are
linear in the beliefs $\mu(I)(h)$ and polynomial in the strategies
$\beta(I)(a)$. Similar to the one-shot-deviation principle for subgame perfect
equilibria, local sequential rationality is a sufficient (and necessary)
condition for sequential rationality under the assumption that the assessment
is already consistent, as shown by \citeay{hendon1996one}.

For consistency, we briefly discussed the
restrictions that consistency can pose on unreached information sets, which are
not entirely obvious. \citeay{kohlberg1997independence} proposed a way to
represent consistency as a system of polynomial equations.
We have elaborated their approach in sufficient detail to be easily
\mbox{implemented}.
While this gives us a solution to represent consistency, it comes with a new
problem: finding the extreme directions of a set of cones.

The double description method provides an established method to calculate the
extreme directions of a single cone. However, it is inefficient for calculating
the extreme directions of a set of cones. We proposed a modified version of the
algorithm, that takes advantage of its iterative nature to avoid calculating the
extreme directions of the same cone multiple times. Our method still considers
each cone individually, and even though we found a way to prune some cones, the
number of cones is still prohibitively large for larger games.

We use symbolic computation to solve our system of polynomial equations and
inequalities. This gives us a compact and exact representation of all sequential
equilibria of a game, even if there are connected components of infinitely many
equilibria.
However, both steps of this process, generating the system of equations and
solving it, are infeasible for large games. In the first step, the number of
cones is exponential in the size of the game tree, and in the second step, the
cylindrical algebraic decomposition is double exponential in the number of
variables in an assessment. While we have proposed some minor optimizations,
pruning some of the cones and substituting some of the variables, these
improvements are not sufficient to make the approach feasible for larger games.

Nevertheless, our implementation
successfully handles small games such as Selten's Horse or small signaling
games. It is already difficult to find all sequential equilibria for these games
by hand. To the best of our knowledge, our implementation is the first tool that
finds all sequential equilibria in finite imperfect information games,
representing them symbolically.
Besides for analyzing small games, we see a practical use for teaching the
concept of sequential equilibria in game theory courses.

There are approaches from related work for computing sequential equilibria for
restricted subsets of games.
\citeay{sequentialtwoplayer} use minimax strategies to compute sequential
equilibria for two-player games.
\citeay{largesequential} compute sequential equilibria in a class of games
where chance nodes are the only source of uncertainty.
\citeay{turocy2010computing} provides a numerical algorithm to compute
sequential equilibria for finite imperfect information games, implemented in
Gambit. However, the implementation suffers from issues of numerical instability
making it unreliable \cite{sequentialtwoplayer}.
Finally, \citeay{panozzo2014algorithms} proposed some approaches for the
algorithmic verification of sequential equilibria.

In future work, we plan to investigate techniques to improve the performance of
our approach. These could include improving the computation of the extreme
directions of the set of cones, or directly optimizing the equations to gain
performance in the cylindrical algebraic decomposition. There are also
alternative characterizations of consistency that we could potentially use that
do not rely on the extreme direction of polyhedral cones
\cite{streufert2012additive,dilme2022lexicographic}.
We believe that this may allow us to solve slightly larger games such as Kuhn
Poker \cite{kuhn1950simplified}.

\begin{acks}
  This work was supported by the ANR LabEx CIMI (grant ANR-11-LABX-0040) within
  the French State Programme ``Investissements d’Avenir'', and the EU ICT-48
  2020 project TAILOR (No. 952215). We thank Robert Mattmüller, Umberto Grandi,
  Bernhard von Stengel, and Martin Antonov for their helpful feedback.
\end{acks}

\bibliographystyle{ACM-Reference-Format}
\bibliography{bibliography.bib}


\begin{thebibliography}{20}


\ifx \showCODEN    \undefined \def \showCODEN     #1{\unskip}     \fi
\ifx \showDOI      \undefined \def \showDOI       #1{#1}\fi
\ifx \showISBNx    \undefined \def \showISBNx     #1{\unskip}     \fi
\ifx \showISBNxiii \undefined \def \showISBNxiii  #1{\unskip}     \fi
\ifx \showISSN     \undefined \def \showISSN      #1{\unskip}     \fi
\ifx \showLCCN     \undefined \def \showLCCN      #1{\unskip}     \fi
\ifx \shownote     \undefined \def \shownote      #1{#1}          \fi
\ifx \showarticletitle \undefined \def \showarticletitle #1{#1}   \fi
\ifx \showURL      \undefined \def \showURL       {\relax}        \fi
\providecommand\bibfield[2]{#2}
\providecommand\bibinfo[2]{#2}
\providecommand\natexlab[1]{#1}
\providecommand\showeprint[2][]{arXiv:#2}

\bibitem[\protect\citeauthoryear{Aliprantis and Burkinshaw}{Aliprantis and
  Burkinshaw}{1990}]%
        {aliprantis19901}
\bibfield{author}{\bibinfo{person}{Charalambos~D. Aliprantis} {and}
  \bibinfo{person}{Owen Burkinshaw}.} \bibinfo{year}{1990}\natexlab{}.
\newblock \showarticletitle{Fundamentals of Real Analysis}.
\newblock In \bibinfo{booktitle}{\emph{Principles of Real Analysis (Second
  Edition)}}. \bibinfo{publisher}{Academic Press}, \bibinfo{address}{Boston,
  USA}, \bibinfo{pages}{1--47}.
\newblock
\showISBNx{978-0-12-050255-4}


\bibitem[\protect\citeauthoryear{Azhar, McLennan, and Reíf}{Azhar
  et~al\mbox{.}}{2005}]%
        {azhar05}
\bibfield{author}{\bibinfo{person}{Salman Azhar}, \bibinfo{person}{Andrew
  McLennan}, {and} \bibinfo{person}{John~H. Reíf}.}
  \bibinfo{year}{2005}\natexlab{}.
\newblock \showarticletitle{Computation of equilibria in noncooperative games}.
\newblock \bibinfo{journal}{\emph{Computers \& Mathematics with Applications}}
  \bibinfo{volume}{50}, \bibinfo{number}{5} (\bibinfo{year}{2005}),
  \bibinfo{pages}{823--854}.
\newblock


\bibitem[\protect\citeauthoryear{Dilm{\'e}}{Dilm{\'e}}{2023}]%
        {dilme2022lexicographic}
\bibfield{author}{\bibinfo{person}{Francesc Dilm{\'e}}.}
  \bibinfo{year}{2023}\natexlab{}.
\newblock \bibinfo{title}{Lexicographic Numbers in Extensive Form Games}.
\newblock
\newblock
\newblock
\shownote{Work in progress, accessed via \url{https://www.francescdilme.com} on
  2023-10-09.}


\bibitem[\protect\citeauthoryear{Gilpin and Sandholm}{Gilpin and
  Sandholm}{2006}]%
        {largesequential}
\bibfield{author}{\bibinfo{person}{Andrew Gilpin} {and} \bibinfo{person}{Tuomas
  Sandholm}.} \bibinfo{year}{2006}\natexlab{}.
\newblock \showarticletitle{Finding equilibria in large sequential games of
  imperfect information}. In \bibinfo{booktitle}{\emph{Proceedings of the 7th
  {ACM} Conference on Electronic Commerce (EC-2006)}}.
  \bibinfo{publisher}{{ACM} Press}, \bibinfo{address}{New York, USA},
  \bibinfo{pages}{160--169}.
\newblock


\bibitem[\protect\citeauthoryear{Hendon, Jacobsen, and Sloth}{Hendon
  et~al\mbox{.}}{1996}]%
        {hendon1996one}
\bibfield{author}{\bibinfo{person}{Ebbe Hendon}, \bibinfo{person}{Hans~J.
  Jacobsen}, {and} \bibinfo{person}{Birgitte Sloth}.}
  \bibinfo{year}{1996}\natexlab{}.
\newblock \showarticletitle{The One-Shot-Deviation Principle for Sequential
  Rationality}.
\newblock \bibinfo{journal}{\emph{Games and Economic Behavior}}
  \bibinfo{volume}{12}, \bibinfo{number}{2} (\bibinfo{year}{1996}),
  \bibinfo{pages}{274--282}.
\newblock


\bibitem[\protect\citeauthoryear{Kohlberg and Reny}{Kohlberg and Reny}{1997}]%
        {kohlberg1997independence}
\bibfield{author}{\bibinfo{person}{Elon Kohlberg} {and}
  \bibinfo{person}{Philip~J. Reny}.} \bibinfo{year}{1997}\natexlab{}.
\newblock \showarticletitle{Independence on Relative Probability Spaces and
  Consistent Assessments in Game Trees}.
\newblock \bibinfo{journal}{\emph{Journal of Economic Theory}}
  \bibinfo{volume}{75}, \bibinfo{number}{2} (\bibinfo{year}{1997}),
  \bibinfo{pages}{280--313}.
\newblock


\bibitem[\protect\citeauthoryear{Kreps and Wilson}{Kreps and Wilson}{1982}]%
        {kreps1982sequential}
\bibfield{author}{\bibinfo{person}{David~M. Kreps} {and}
  \bibinfo{person}{Robert Wilson}.} \bibinfo{year}{1982}\natexlab{}.
\newblock \showarticletitle{Sequential Equilibria}.
\newblock \bibinfo{journal}{\emph{Econometrica}} \bibinfo{volume}{50},
  \bibinfo{number}{4} (\bibinfo{year}{1982}), \bibinfo{pages}{863--894}.
\newblock


\bibitem[\protect\citeauthoryear{Kuhn}{Kuhn}{1950}]%
        {kuhn1950simplified}
\bibfield{author}{\bibinfo{person}{Harold~W. Kuhn}.}
  \bibinfo{year}{1950}\natexlab{}.
\newblock \showarticletitle{A simplified two-person poker}.
\newblock \bibinfo{journal}{\emph{Contributions to the Theory of Games}}
  \bibinfo{volume}{1} (\bibinfo{year}{1950}), \bibinfo{pages}{97--103}.
\newblock


\bibitem[\protect\citeauthoryear{LaValle}{LaValle}{2006}]%
        {lavalle2006planning}
\bibfield{author}{\bibinfo{person}{Steven~M. LaValle}.}
  \bibinfo{year}{2006}\natexlab{}.
\newblock \bibinfo{booktitle}{\emph{Planning Algorithms}}.
\newblock \bibinfo{publisher}{Cambridge University Press},
  \bibinfo{address}{Cambridge, UK}.
\newblock
\showISBNx{9780511546877}


\bibitem[\protect\citeauthoryear{McKelvey, McLennan, and Turocy}{McKelvey
  et~al\mbox{.}}{2022}]%
        {gambit}
\bibfield{author}{\bibinfo{person}{Richard~D. McKelvey},
  \bibinfo{person}{Andrew~M. McLennan}, {and} \bibinfo{person}{Theodore~L.
  Turocy}.} \bibinfo{year}{2022}\natexlab{}.
\newblock \bibinfo{title}{Gambit: Software Tools for Game Theory, Version
  16.0.2}.
\newblock \bibinfo{howpublished}{\url{http://www.gambit-project.org}}.
\newblock


\bibitem[\protect\citeauthoryear{Miltersen and S{\o}ren\-sen}{Miltersen and
  S{\o}ren\-sen}{2006}]%
        {sequentialtwoplayer}
\bibfield{author}{\bibinfo{person}{Peter~B. Miltersen} {and}
  \bibinfo{person}{Troels~B. S{\o}ren\-sen}.} \bibinfo{year}{2006}\natexlab{}.
\newblock \showarticletitle{Computing sequential equilibria for two-player
  games}. In \bibinfo{booktitle}{\emph{Proceedings of the 17th Annual
  {ACM-SIAM} Symposium on Discrete Algorithms, (SODA 2006)}}.
  \bibinfo{publisher}{{ACM} Press}, \bibinfo{address}{USA},
  \bibinfo{pages}{107--116}.
\newblock


\bibitem[\protect\citeauthoryear{Osborne and Rubinstein}{Osborne and
  Rubinstein}{1994}]%
        {osborne1994course}
\bibfield{author}{\bibinfo{person}{Martin~J. Osborne} {and}
  \bibinfo{person}{Ariel Rubinstein}.} \bibinfo{year}{1994}\natexlab{}.
\newblock \bibinfo{booktitle}{\emph{A Course in Game Theory}}.
\newblock \bibinfo{publisher}{The MIT Press}, \bibinfo{address}{Cambridge,
  USA}.
\newblock
\showISBNx{ARRAY(0x51cf13c0)}


\bibitem[\protect\citeauthoryear{Panozzo}{Panozzo}{2014}]%
        {panozzo2014algorithms}
\bibfield{author}{\bibinfo{person}{Fabio Panozzo}.}
  \bibinfo{year}{2014}\natexlab{}.
\newblock \emph{\bibinfo{title}{Algorithms for the verification, computation
  and learning of equilibria in extensive-form games}}.
\newblock \bibinfo{thesistype}{Ph.D. Dissertation}.
  \bibinfo{school}{Polytechnic University of Milan, Italy}.
\newblock


\bibitem[\protect\citeauthoryear{Perea}{Perea}{2002}]%
        {perea2002note}
\bibfield{author}{\bibinfo{person}{Andr{\'{e}}s Perea}.}
  \bibinfo{year}{2002}\natexlab{}.
\newblock \showarticletitle{A note on the one-deviation property in extensive
  form games}.
\newblock \bibinfo{journal}{\emph{Games and Economic Behavior}}
  \bibinfo{volume}{40}, \bibinfo{number}{2} (\bibinfo{year}{2002}),
  \bibinfo{pages}{322--338}.
\newblock


\bibitem[\protect\citeauthoryear{Pimienta}{Pimienta}{2014}]%
        {pimienta2014bayesian}
\bibfield{author}{\bibinfo{person}{Carlos Pimienta}.}
  \bibinfo{year}{2014}\natexlab{}.
\newblock \showarticletitle{Bayesian and consistent assessments}.
\newblock \bibinfo{journal}{\emph{Economic Theory}} \bibinfo{volume}{55},
  \bibinfo{number}{3} (\bibinfo{year}{2014}), \bibinfo{pages}{601--617}.
\newblock


\bibitem[\protect\citeauthoryear{Savani and von Stengel}{Savani and von
  Stengel}{2015}]%
        {gte}
\bibfield{author}{\bibinfo{person}{Rahul Savani} {and}
  \bibinfo{person}{Bernhard von Stengel}.} \bibinfo{year}{2015}\natexlab{}.
\newblock \showarticletitle{Game Theory Explorer: software for the applied game
  theorist}.
\newblock \bibinfo{journal}{\emph{Computational Management Science}}
  \bibinfo{volume}{12}, \bibinfo{number}{1} (\bibinfo{year}{2015}),
  \bibinfo{pages}{5--33}.
\newblock


\bibitem[\protect\citeauthoryear{Streufert}{Streufert}{2012}]%
        {streufert2012additive}
\bibfield{author}{\bibinfo{person}{Peter~A. Streufert}.}
  \bibinfo{year}{2012}\natexlab{}.
\newblock \bibinfo{booktitle}{\emph{Additive Plausibility Characterizes the
  Supports of Consistent Assessments}}.
\newblock \bibinfo{type}{{T}echnical {R}eport} 20123.
  \bibinfo{institution}{University of Western Ontario, Department of
  Economics}.
\newblock


\bibitem[\protect\citeauthoryear{Strzebonski}{Strzebonski}{2023}]%
        {cylindersmathworld}
\bibfield{author}{\bibinfo{person}{Adam Strzebonski}.}
  \bibinfo{year}{2023}\natexlab{}.
\newblock \bibinfo{title}{Cylindrical Algebraic Decomposition}.
\newblock
  \bibinfo{howpublished}{\url{https://mathworld.wolfram.com/CylindricalAlgebraicDecomposition.html}}.
\newblock
\newblock
\shownote{Accessed on 2023-10-09.}


\bibitem[\protect\citeauthoryear{Turocy}{Turocy}{2010}]%
        {turocy2010computing}
\bibfield{author}{\bibinfo{person}{Theodore Turocy}.}
  \bibinfo{year}{2010}\natexlab{}.
\newblock \showarticletitle{Computing sequential equilibria using agent quantal
  response equilibria}.
\newblock \bibinfo{journal}{\emph{Economic Theory}} \bibinfo{volume}{42},
  \bibinfo{number}{1} (\bibinfo{year}{2010}), \bibinfo{pages}{255--269}.
\newblock


\bibitem[\protect\citeauthoryear{Zolotykh}{Zolotykh}{2012}]%
        {zolotykh2012new}
\bibfield{author}{\bibinfo{person}{N~Yu Zolotykh}.}
  \bibinfo{year}{2012}\natexlab{}.
\newblock \showarticletitle{New modification of the double description method
  for constructing the skeleton of a polyhedral cone}.
\newblock \bibinfo{journal}{\emph{Computational Mathematics and Mathematical
  Physics}} \bibinfo{volume}{52}, \bibinfo{number}{1} (\bibinfo{year}{2012}),
  \bibinfo{pages}{146--156}.
\newblock


\end{thebibliography}
\appendix

\end{document}